\renewcommand{\normalsize}{\fontsize{10}{12}\selectfont}
\newtheorem{lemma}{\noindent \bf Lemma}
\newtheorem{theorem}{ \bf Theorem}
\newenvironment{proof}{{ \noindent \it Proof.}}{\hfill $\blacksquare$}
\renewcommand{\maketag@@@}[1]{\hbox{\m@th\normalsize\normalfont#1}}%
\begin{document}
\title{Carrier Aggregation Enabled MIMO-OFDM Integrated Sensing and Communication}
\author{Haotian Liu,~\IEEEmembership{Graduate Student Member,~IEEE,}
Zhiqing Wei,~\IEEEmembership{Member,~IEEE,} 
Jinghui Piao,
Huici Wu,~\IEEEmembership{Member,~IEEE,}
Xingwang Li,~\IEEEmembership{Senior Member,~IEEE,}\\
Zhiyong Feng,~\IEEEmembership{Senior Member,~IEEE}

\thanks{Haotian Liu, Zhiqing Wei, Jinghui Piao, Huici Wu, Zhiyong Feng are with the Beijing University of Posts and Telecommunications, Beijing 100876, China (emails: \{haotian\_liu; weizhiqing; piaojinghui; dailywu; fengzy\}@bupt.edu.cn). Corresponding authors: \textit{Zhiqing Wei, Haotian Liu.}

Xingwang Li is with the Henan Polytechnic University, Jiaozuo 454000,
China (e-mail: lixingwangbupt@gmail.com).

}}

\maketitle

\begin{abstract}
In the evolution towards the forthcoming era of sixth-generation (6G) mobile communication systems characterized by ubiquitous intelligence, integrated sensing and communication (ISAC) is in a phase of burgeoning development. 
However, the capabilities of communication and sensing within single frequency band 
fall short of meeting the escalating demands.
To this end,
this paper introduces a carrier aggregation (CA)-enabled multi-input multi-output orthogonal frequency division multiplexing (MIMO-OFDM) ISAC system fusing the sensing data on high and low-frequency bands by symbol-level fusion for ultimate communication experience and high-accuracy sensing.
The challenges in sensing signal processing introduced by CA include the initial phase misalignment of the echo signals on high and low-frequency bands due to attenuation and radar cross section, and the fusion of the sensing data on high and low-frequency bands with different physical-layer parameters. 
To this end, the sensing signal processing is decomposed into two stages. In the first stage, the problem of initial phase misalignment of the echo signals on high and low-frequency bands is solved by the angle compensation, spatial filtering and cyclic cross-correlation operations. In the second stage, this paper realizes symbol-level fusion of the sensing data on high and low-frequency bands through sensing vector rearrangement and cyclic prefix adjustment operations, thereby obtaining high-precision sensing performance.
Then, the closed-form communication mutual information (MI) and sensing Cram\'er-Rao lower bound (CRLB) for the proposed ISAC system are derived to explore the theoretical performance bound with CA. 
Simulation results validate the feasibility and superiority of the proposed ISAC system. 

\end{abstract}
\begin{IEEEkeywords}
Carrier aggregation (CA), 
integrated sensing and communication (ISAC), 
multi-band cooperative ISAC, 
multi-input multi-output (MIMO),
orthogonal frequency division multiplexing (OFDM),
symbol-level fusion.
\end{IEEEkeywords}

\IEEEpeerreviewmaketitle

\section{Introduction}
The next-generation mobile communication system is anticipated to offer a plethora of multi-dimensional services extending beyond mere communication, such as smart transportation, smart factories, and digital twins~\cite{deng2023dynamic,zhuang2024,wei2024deep,zhou1}, which have raised the demands for high-speed communication and high-accuracy sensing. To this end, integrated sensing and communication (ISAC) is considered to provide the high-speed communication experience and high-accuracy sensing~\cite{wang2023deep,du2024nested,liu2018toward,wei2024deep}.

Nevertheless, the limitations imposed by sensing under single frequency band have spurred the study of ISAC over multiple frequency bands with the technique of carrier aggregation (CA)~\cite{wei2023carrier,wei2024deep,elbir2024curse,adamu2024analysis}. 
CA-enabled ISAC systems 
have the following potential advantages.
\begin{itemize}
    \item \textbf{Communication:} CA enhances bandwidth utilization and boosts communication transmission data rate while supporting various wireless access technologies, thereby improving overall communication performance~\cite{PedersenCA,wei2023carrier}.
    \item \textbf{Sensing:} CA empowers sensing systems with expanded bandwidth, which facilitates high-resolution sensing and anti-noise performance~\cite{wei2023carrier,wei2024deep}.
\end{itemize}
Therefore, it is reasonable to utilize CA to empower ISAC mobile communication system, which is expected to bring expeditious communication and precision sensing.


\begin{table*}[!ht]
\centering
\caption{A summary for the related work, with the abbreviations OFDM: orthogonal frequency division multiplexing, DFT: discrete Fourier transmit, CS: compressed sensing, NC: non-continuous, BWE: bandwidth extrapolation.}
\label{tab1}
\resizebox{0.8\textwidth}{!}{%
\renewcommand{\arraystretch}{1}
\begin{tabular}{|c|c|c|c|}
\hline
\textbf{Types} & \multicolumn{1}{c|}{\textbf{\begin{tabular}[c]{@{}c@{}}related\\ work\end{tabular}}} & \multicolumn{1}{c|}{\textbf{Innovation}} & \multicolumn{1}{c|}{\textbf{Shortcoming}} \\
\hline
\multirow{8}{*}{\textbf{\begin{tabular}[c]{@{}c@{}}Intra-band \\ CA\end{tabular}}} & \cite{pfeffer2015stepped}  & {\begin{tabular}[c]{@{}c@{}}Introduce the concept of stepped-carrier OFDM \\ radar to achieve high-resolution sensing\end{tabular}} & Only suitable for low speed targets  \\ \cline{2-4} &
\cite{schweizer2017stepped} & {\begin{tabular}[c]{@{}c@{}}Propose a novel DFT method to address the \\shortcoming in \cite{pfeffer2015stepped}\end{tabular}} & Poor anti-noise performance \\  \cline{2-4} &
\cite{huang2017nc} & {\begin{tabular}[c]{@{}c@{}}An ISAC system under NC fragmented spectrum bands \\is proposed to enhance spectrum utilization\end{tabular}}   &  The degradation of Fourier sidelobes \\ \cline{2-4} &
\cite{liu2023isac} & {\begin{tabular}[c]{@{}c@{}}A CS-based sensing method is proposed \\ to overcome the shortcoming in \cite{huang2017nc}\end{tabular}} & High computational complexity \\ \hline     
\multirow{5}{*}{\textbf{\begin{tabular}[c]{@{}c@{}}Inter-band\\ CA\end{tabular}}} &
\cite{cuomo1992bandwidth} &
{\begin{tabular}[c]{@{}c@{}}A BWE method is proposed to recovery \\the data on missing spectrum bands\end{tabular}} & The resolution performance is limited \\ \cline{2-4} &
\cite{suwa2004bandwidth} &
{\begin{tabular}[c]{@{}c@{}}Extend the BWE method to polarimetric radar \\ results in a higher enhancement in resolution\end{tabular}} & Unsuitable for OFDM signal \\ \cline{2-4} &
\cite{wei2023carrier} &{\begin{tabular}[c]{@{}c@{}} Joint high and low-frequency bands OFDM signal\\ for high-accuracy sensing  \end{tabular}}& Unsuitable for multiple antennas system  \\  \hline          
\end{tabular} }
\end{table*}

The researches on CA-enabled ISAC systems are classified into two types: intra-band CA and inter-band CA. The main challenges of CA-enabled ISAC system lie in signal design and processing. A summary of the related work is shown in Table \ref{tab1} and the details are as follows.
\begin{itemize}
    \item \textit{Intra-band CA:}
    Two primary ISAC signals are designed for intra-band CA: stepped-carrier OFDM ISAC signals and non-continuous (NC) fragmented OFDM ISAC signals.
    In terms of stepped-carrier OFDM ISAC signals, 
    Pfeffer~\textit{et al.} in \cite{pfeffer2015stepped} proposed the concept of stepped-carrier OFDM signals, which assigns different carrier frequencies to each resource block group during the continuous OFDM symbol periods, expanding the overall bandwidth and enhancing the sensing performance. However, it fails to consider the Doppler frequency shift induced by rapidly moving targets.
    To this end,
    Schweizer~\textit{et al.} in~\cite{schweizer2017stepped} explored a modified DFT method to mitigate the phase error caused by the high mobility of targets in stepped-carrier OFDM scheme, which obtains a high-accuracy velocity estimation. 
    In terms of NC fragmented OFDM ISAC signals,
    Huang~\textit{et al.} in~\cite{huang2017nc} introduced a NC fragmented OFDM ISAC system, which can flexibly allocate spectrum bands and adapt to dynamic spectrum environment. However, the missing spectrum bands deteriorate Fourier sidelobes in target sensing.
    To address this problem,
    Liu~\textit{et al.} in~\cite{liu2023isac} proposed a joint compressed sensing (CS) and machine learning sensing method for fragmented spectrum bands to achieve high-accuracy estimations of range and velocity with low sidelobes.
    \item \textit{Inter-band CA:}
    Multi-band radar stands as a representative example in inter-band CA.
    Cuomo in~\cite{cuomo1992bandwidth} discussed a NC multi-band coherent radar and proposed a bandwidth extrapolation (BWE) method to recover missing spectrum bands, thereby enhancing the range resolution of radar.
    Suwa~\textit{et al.} in~\cite{suwa2004bandwidth} extended the BWE method to polarimetric radar, which achieves high resolution of range estimation. 
\end{itemize}
    
In summary, intra-band and inter-band CA enhance the efficient utilization of fragmented frequency bands, leading to performance gains in both communication and sensing. Currently, the fragmented frequency bands in mobile communication system are mostly dispersed, indicating broader applications in mobile communication system under inter-band CA.
However, the investigation into inter-band CA in ISAC mobile communication systems is notably scarce. In our previous work, we have investigated the CA-enabled OFDM ISAC system, corroborating the effectiveness of CA through the validation~\cite{wei2023carrier}. However, our initial exploration, although valuable, remains somewhat rudimentary and idealized, overlooking the typical MIMO-OFDM signal in mobile communication systems and the challenges of sensing signal processing caused by practical channel environment. Therefore, it is particularly crucial to further study the feasibility and practicability of inter-band CA-enabled ISAC signal processing.

Hence, this paper investigates a CA-enabled MIMO-OFDM ISAC system where the fragmented high and low-frequency bands are shared for downlink (DL) communication and sensing potential targets. In addition, a CA-enabled MIMO-OFDM ISAC signal model and the corresponding sensing processing method are presented. Through theoretical derivations and numerical simulations, we validate the superiority of the proposed ISAC system over conventional ISAC systems. The main contributions of this paper are summarized as follows.
\begin{itemize}
    \item We present a CA-enabled MIMO-OFDM ISAC system aiming at achieving unparalleled communication experience and high-accuracy sensing. By introducing CA and MIMO, the communication capacity of ISAC system is significantly enhanced, as well as improving the accuracy and anti-noise capability of sensing. The simulation results validate the feasibility and efficiency of the proposed ISAC system, demonstrating its advantages over conventional ISAC systems.
    \item In terms of sensing processing, we delve into the data-level and symbol-level fusion of the sensing data on high and low-frequency bands. Data-level fusion refers to the fusion of multiple target information estimated by sensing data, while symbol-level fusion refers to the fusion of received symbols with phase information~\cite{wei2023symbol,wei2024integrated}. In the symbol-level fusion of sensing data, two challenges manifest:
    1) The initial phase misalignment of the echo signals on high and low-frequency bands due to attenuation and radar cross section (RCS); 2) The fusion of the sensing data on high and low-frequency bands with different physical-layer parameters. To this end, this paper decomposes the sensing signal processing into signal preprocessing stage and sensing information fusion stage. 
    The signal preprocessing stage fully leverage the sensing data from multiple antennas while aligning the initial phases of the echo signals on high and low-frequency bands using the cyclic cross-correlation (CCC) method, as detailed in Section \ref{se3-A}. In the sensing information fusion stage, this paper realizes the symbol-level fusion of the sensing data on high and low-frequency bands, as well as the range and velocity estimations of targets by rearranging the high and low-frequency feature vectors and using the premise of cyclic prefix (CP) dynamic adjustment, as detailed in Section \ref{sec3-B}. 
    \item The theoretical boundaries of the proposed CA-enabled MIMO-OFDM ISAC signal are derived. Specifically, mutual information (MI) is adopted as the performance metric for communication, while the Cram\'er-Rao lower bound (CRLB) is utilized as a typical performance metric of sensing. Based on the proposed ISAC system, we derive the closed-form communication MI under the frequency-selective fading MIMO-OFDM channel model and sensing CRLBs for range and velocity estimations under the MIMO-OFDM sensing signal model. 
\end{itemize}

The rest of the paper is organized as follows. Section \ref{se2} introduces the system model, including the communication model and sensing model. In Section \ref{se3}, a novel CA-enabled MIMO-OFDM ISAC signal processing method is proposed. Section \ref{se4} derives the communication MI and sensing CRLB of the proposed CA-enabled MIMO-OFDM ISAC signal. The simulation results are provided in Section \ref{se5}, while the conclusions are outlined in Section \ref{se6}.

\textit{Notations:} $\{\cdot\}$ typically stands for a set of various index values. 
Black bold letters represent matrices or vectors.
$\mathbb{C}$ and $\mathbb{R}$ denote the set of complex and real numbers, respectively. 
$\left[\cdot\right]^{\text{T}}$, $\left[\cdot\right]^{\text{H}}$, $\left[\cdot\right]^{-1}$, and $\left(\cdot\right)^{\text{*}}$ stand for the transpose operator, conjugate transpose operator, inverse operator, and conjugate operator, respectively. 
For a complex-valued vector $\mathbf{u}$, $\text{diag}\left(\mathbf{u}\right)$ stands for a diagonal matrix whose diagonal elements are given by the elements of $\mathbf{u}$. $\text{det}\left(\mathbf{u}\right)$ and $\text{tr}\left(\mathbf{u}\right)$ are the determinant and trace of $\mathbf{u}$, respectively. $E\left(\cdot\right)$ stands for the expectation operator.
A complex Gaussian random variable $\mathbf{u}$ with mean $\mu_u$ and variance $\sigma_u^2$ is denoted by $\mathbf{u} \sim \mathcal{CN}\left(\mu_u,\sigma_u^2\right)$. 
$\odot$ is the Hadamard product.

\section{System Model} \label{se2}
We consider a CA-enabled MIMO-OFDM ISAC base station (BS) for DL communication and sensing potential targets as shown in Fig.~\ref{fig1}. 
The ISAC BS is equipped with $N_{\text{T}}+N_{\text{R}}$ antennas, where $N_{\text{T}}$ transmit antennas in transmitter (Tx) simultaneously serve $U$ user equipments (UEs) with $N_\text{U}$ antennas and detect $I$ potential targets. $N_{\text{R}}$ receive antennas in receiver (Rx) are equipped to receive the echo signals reflected by targets~\cite{hua2022integrated}.
The spacing of antennas in BS is denoted by $d_\text{r}$~\cite{liu2018}. The ISAC BS can transmit DL data information while simultaneously emitting sensing signals for target sensing. Data echo signals are used for UE sensing, and sensing echo signals for target sensing. Since the sensing processing methods are identical, this paper focuses on the processing of sensing echo signals.
\begin{figure}[!ht]   
    \centering    \includegraphics[width=0.42\textwidth]{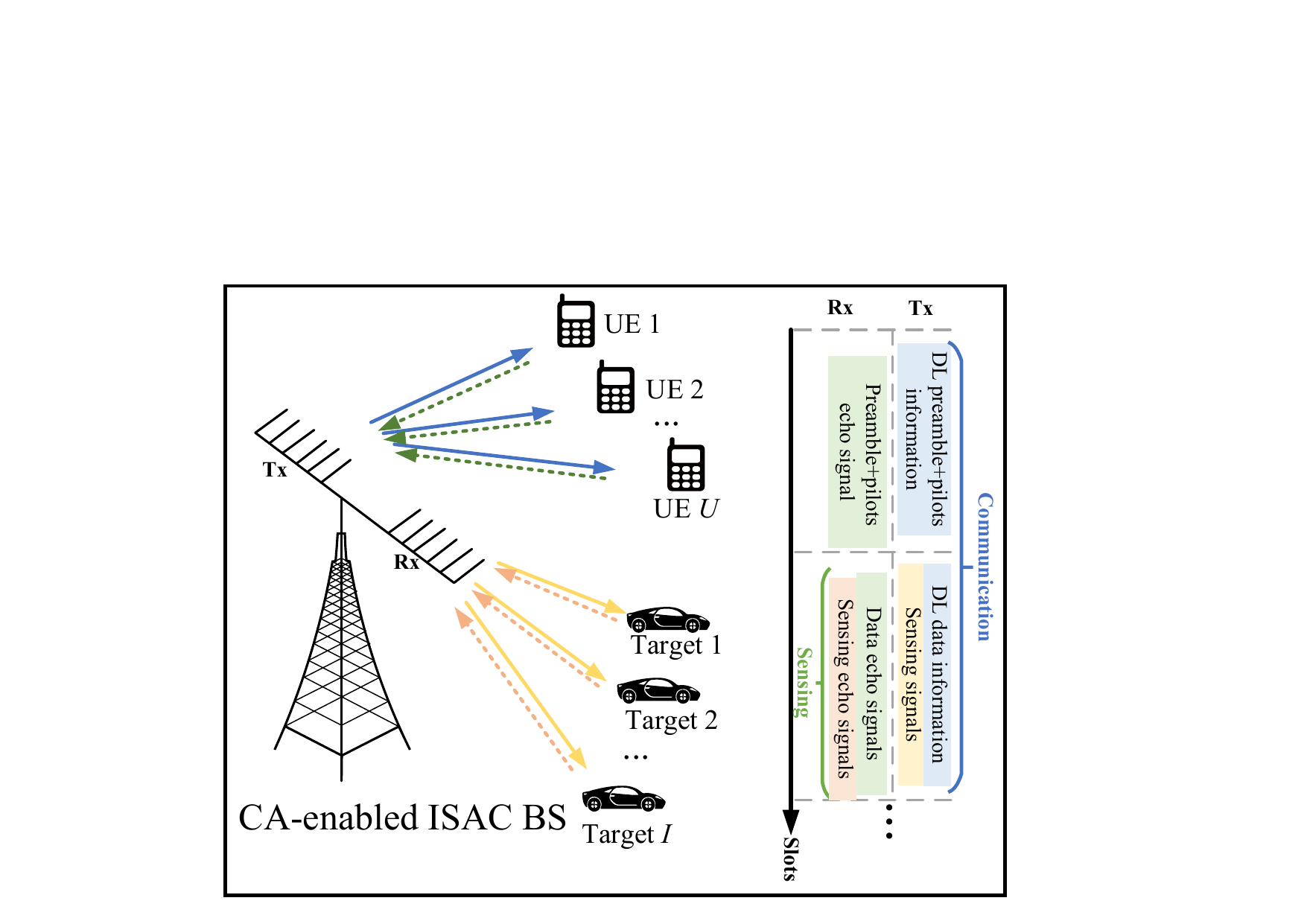}
    \caption{CA-enabled MIMO-OFDM ISAC system.}
    \label{fig1}
\end{figure}

\subsection{CA-enabled MIMO-OFDM ISAC Signal}
For the $b \in {\mathcal{B}}=\{1,2\}$-th carrier component (CC) of the ISAC BS, a total number of $N_b$ subcarriers and $M_b$ OFDM symbols are used for DL communication and sensing, where $n \in \{0, 1, \cdots ,N_b-1\}$ and $m \in \{0, 1, \cdots ,M_b-1\}$ denote the indices of subcarriers and OFDM symbols, respectively. 
For the $b$-th CC, the symbol data matrix~\cite{Hu2022MIMO-OFDM}
\begin{equation} \label{eq1}
    \mathbf{S}^b=\left[\mathbf{S}_0^b, \mathbf{S}_1^b, \cdots, \mathbf{S}_{N_b-1}^b\right]^\text{T} \in \mathbb{C}^{UN_b \times M_b}
\end{equation}
represents the data for all UEs on $N_b$ subcarriers during $M_b$ OFDM symbol times, where 
$\mathbf{S}_n^b  =\left[\mathbf{s}_{n,1}^b, \mathbf{s}_{n,2}^b, \cdots, \mathbf{s}_{n,U}^b\right]\in
\mathbb{C}^{M_b \times U}$ indicates the data for all UEs on the $n$-th subcarrier during $M_b$ symbol times with $\mathbf{s}_{n,u}^b \in \mathbb{C}^{M_b \times 1}$ being the specific symbol data vector for the $u \in \{1,2,\cdots, U\}$-th UE~\cite{Hu2022MIMO-OFDM}. 
There is a trade-off between the randomness of the communication signal and the certainty of the sensing signal. To guarantee the balanced ISAC performance, this paper adopts the low complexity data independent precoding (DIP) scheme proposed in~\cite{lu2023random}, which can be applied to various ISAC systems. Specifically, an ISAC precoding matrix~\cite{Hu2022MIMO-OFDM,lu2023random} 
\begin{equation}  \label{eq2}  \mathbf{W}^b=\mathrm{diag}\left(\mathbf{W}_{0}^b,\mathbf{W}_{1}^b,\cdots,\mathbf{W}_{N_b-1}^b\right)\in\mathbb{C}^{N_{b}N_{\text{T}} \times N_{b}U}
\end{equation}
is designed for all UEs on $N_b$ subcarriers. $\mathbf{W}_n^b=\left[\mathbf{w}_{n,1}^b,\mathbf{w}_{n,2}^b,\cdots,\mathbf{w}_{n,U}^b\right]\in\mathbb{C}^{N_{\text{T}}\times U}$ denotes the precoding matrix for all UEs on the $n$-th subcarriers with $\mathbf{w}_{n,u}^b \in \mathbb{C}^{N_\text{T} \times 1}$ being the specific precoding vector for the $u$-th UE~\cite{Hu2022MIMO-OFDM}. Therefore, the transmit precoded data on $N_b$ subcarriers during $M_b$ symbol times is~\cite{Hu2022MIMO-OFDM} 
\begin{equation} \label{eq3}  
\begin{aligned}
\mathbf{X}^b&=\mathbf{W}^b\mathbf{S}^b \\ & =\left[(\mathbf{X}_{0}^b)^{\text{T}},(\mathbf{X}_{1}^b)^{\text{T}},\cdots,(\mathbf{X}_{N_b-1}^b)^{\text{T}}\right]^{\text{T}}\in\mathbb{C}^{N_bN_{\text{T}}\times M_b}, 
\end{aligned}
\end{equation}
where $\mathbf{X}_{n}^b=\mathbf{W}^b(\mathbf{S}_n^b)^\text{T} \in \mathbb{C}^{N_\text{T} \times M_b}$ is the data on the $n$-th subcarrier during $M_b$ OFDM symbol times. 
The transmit precoded data is transformed into time-domain signal by undergoing an inverse discrete Fourier transform (IDFT). 

Therefore, the baseband time-domain transmit signal by the $k$-th transmit antenna on the $n'$-th subcarrier during the $m$-th OFDM symbol time in the $b$-th CC is~\cite{Parag2018}
\begin{equation} \label{eq4}
x^b\left(k,n',m\right)=\frac1{N_{b}}\sum_{n=0}^{N_{b}-1}\mathbf{x}_k^b\left(n,m\right)e^{j2\pi \frac {n'n}{N_b}},
\end{equation}
where $\mathbf{x}_k^b\left(n,m\right)$ denotes the $\left(n,m\right)$-th element of $\mathbf{x}_k^b=[\mathbf{X}^b]_{(k-1)N_b+1:kN_b,:}$. $n' \in \{0,1,\cdots,N_b-1\}$ represents the index of subcarriers in time-domain and $k \in \{0,1,\cdots,N_\text{T}-1\}$ represents the index of transmit antenna.
A CP with size $N_\text{s}^b$ greater than the maximum delay caused by target reflection is added to avoid inter-symbol interference. 
After digital-to-analog conversion (DAC), the analog signals of the $1$-st and $2$-nd CCs are up-converted using different local oscillators (LOs), mixed, and transmitted via a linear high power amplifier (HPA), which avoids the nonlinear effects. 

Overall, the CA-enabled MIMO-OFDM ISAC signal by the $k$-th transmit antenna on the $n$-th subcarrier during the $m$-th OFDM symbol time is expressed as
\begin{equation}  \label{eq5}
{\fontsize{8}{8}\begin{aligned}
x(t)_{k}=\sum_{b=1}^{\mathcal{B}}\sum_{m=0}^{M_b-1}\sum_{n=0}^{N_b-1}\mathbf{x}_k^b\left(n,m\right)e^{j2\pi t(f_\mathrm{C}^b+n\Delta f^b)}\text{rect}\left(\frac{t-mT^b}{T^b}\right),
\end{aligned}}
\end{equation}
where $f_\mathrm{C}^b$ denotes the carrier frequency of the $b$-th CC; $\Delta f^b=1/T_\text{ofdm}^b$ is the subcarrier spacing of the $b$-th CC, where $T_\text{ofdm}^b$ is the elementary symbol duration~\cite{Parag2018}; $T^b=T_{\text{ofdm}}^b+T_\text{s}^b$ is the total symbol duration with $T_\text{s}^b=(N_\text{s}^bT_{\text{ofdm}}^b)/N_b$ being an adjustable CP duration.

\subsection{Communication Model}
In the communication channel model, since the high and low-frequency bands are processed independently without fusion, the differences in attenuations caused by high and low-frequency bands are not considered.

\subsubsection{Communication channel model}
We utilize the widely adopted multi-path channel model in MIMO channels, assuming the presence of $L$ delay paths in the communication environment. The index of receive antennas in UE is denoted by $z \in \{0,1,\cdots,N_\text{U}-1\}$ and
\begin{equation} \label{eq6}
{ \fontsize{9}{9}  \mathbf{H}_{u}^{l,b} = \left[\setlength{\arraycolsep}{1pt} 
       \begin{array}{cccc}
        h_{0,0}^{u,l,b} & h_{0,1}^{u,l,b} & \cdots & h_{0,N_\text{U}-1}^{u,l,b}  \\
        h_{1,0}^{u,l,b} & h_{1,1}^{u,l,b} & \cdots & h_{1,N_\text{U}-1}^{u,l,b}  \\
        \vdots & \vdots & \ddots & \vdots \\
        h_{N_\text{T}-1,0}^{u,l,b} & h_{N_\text{T}-1,1}^{u,l,b} &\cdots & h_{N_\text{T}-1,N_\text{U}-1}^{u,l,b}
       \end{array}
    \right]}\in \mathbb{C}^{N_\text{T} \times N_\text{U}}
\end{equation} 
is the channel matrix of the $l \in \{0,1,\cdots,L-1\}$-th delay path for the $u$-th UE with $h_{k,z}^{u,l,b}\sim \mathcal{CN}(0,1)$ being the corresponding channel element with independent identically distributed (i.i.d.) Rayleigh fading coefficient~\cite{goldsmith2003capacity,Hu2022MIMO-OFDM}. Therefore, the MIMO communication channel matrix in time-domain for the $u$-th UE on the $n$-th subcarrier during the $m$-th OFDM symbol time is expressed as~\cite{Wei2023MI,Parag2018}
\begin{equation}\label{eq7}
\mathbf{H}_{u,m}^n=\sum_{l=0}^{L-1}\mathbf{H}_u^{l,b}\delta{\left[m-l\right]}\in \mathbb{C}^{N_\text{T}\times N_\text{U}}.
\end{equation}

\subsubsection{Received communication signal}
At the DL preamble+pilots information period, pilots are used to obtain channel state information (CSI) estimation~\cite{cho2010mimo,chen2024downlink}. At the DL data information period, the $u$-th UE receives the DL communication signal, and undergoes down-conversion, analog-to-digital conversion (ADC), CP removal, DFT, and other operations to get a received data $\mathbf{Y}_{u}^\mathrm{C}$, which is expressed as
\begin{equation} \label{eq8} \mathbf{Y}_{u}^\mathrm{C}=\sum_{b=1}^{\mathcal{B}}\mathbf{X}_\mathrm{C}^b\mathbf{H}_\mathrm{C}^b+\mathbf{W}_{u}^\mathrm{C},
\end{equation}
where $\mathbf{H}_\mathrm{C}^b =\left[(\widetilde{\mathbf{H}}_{u,m}^0)^\text{T}, \cdots, (\widetilde{\mathbf{H}}_{u,m}^{N_b-1})^\text{T}\right] \in \mathbb{C}^{N_bN_\text{T}\times N_\text{U}}$ with $\widetilde{\mathbf{H}}_{u,m}^n=\sum_{l=0}^{L-1}\widetilde{\mathbf{H}}_u^{l,b} e^{\frac{-j2 \pi nl}{N_b}} \in \mathbb{C}^{N_\text{T}\times N_\text{U}}$, while $\widetilde{\mathbf{H}}_u^{l,b}$ is the estimated CSI matrix; $\mathbf{X}_\mathrm{C}^b=\text{diag}\left((\mathbf{X}_0^b)^\text{T}, \cdots, (\mathbf{X}_{N_b-1}^b)^\text{T}\right) \in \mathbb{C}^{N_bM_b\times N_bN_\text{T}}$ is the transmit data; $\mathbf{W}_{u}^\mathrm{C}\sim \mathcal{CN}\left(0,\sigma_\mathrm{C}^2\right)$ is additive Gaussian white noise (AWGN) of dual-CC communication channel.

Since linear HPA is considered in this paper, it is proposed that the communication processing of each CC of the proposed ISAC system is similar to that of the traditional MIMO-OFDM communication system, and is independent for each UE~\cite{singhal2015analysis,Parag2018}. In terms of communication performance in CA-enabled MIMO-OFDM ISAC system,  the closed-form communication MI is derived in Section \ref{se4-A}.

\subsection{Sensing Model}
To better reflect practical scenarios, we consider the factors that directly or indirectly influence the high and low-frequency aggregation in sensing signal processing, including phase alignment errors caused by attenuation differences, physical parameter variations, and noise variance discrepancies~\cite{3gpp.38.104}. To the best of our knowledge, this is the first work in CA-enabled ISAC research to consider phase alignment errors and noise variance discrepancies, making it more applicable to practical applications.

\subsubsection{Sensing channel model}
We assume that the radial velocity of the $i$-th potential target is $v_{i,0}$ and the range to BS is $r_{i,0}$. Meanwhile, the angle of arrive (AoA) and the angle of departure (AoD) between the $i$-th target and BS are identical and are denoted by $\theta_{i,\text{Rx}}=\theta_{i,\text{Tx}}$. The transmit and receive steering vectors are expressed as (\ref{eq9}) and (\ref{eq10}), respectively. 
\begin{align}
    \mathbf{a}_{\text{Rx}}(\theta_{i,\text{Rx}})&=\left[e^{j2\pi p (\frac{d_\text{r}}{\lambda^b})\sin(\theta_{i,\text{Rx}})}\right]^{\text{T}}|_{p=0,1,\cdots,N_\text{R}-1},\label{eq9}\\
    \mathbf{a}_{\text{Tx}}(\theta_{i,\text{Tx}})&=\left[e^{j2\pi k (\frac{d_\text{r}}{\lambda^b})\sin(\theta_{i,\text{Tx}})}\right]^{\text{T}}|_{k=0,1,\cdots,N_\text{T}-1},\label{eq10}
\end{align}
where $p$ denotes the index of the receive antenna in BS; $\lambda^b=c/f_\mathrm{C}^b$ is the wavelength with $c$ being the speed of light.

The sensing channel model on the $n$-th subcarrier during the $m$-th OFDM symbol time in the $b$-th CC is expressed as \cite{chen2024downlink,liu2024target}
\begin{equation} \label{eq11}
\mathbf{H}_{m,n}^b=\sum_{i=1}^{I}
 \left[\begin{array}{l}
 \kappa_{i,S}^b e^{j2\pi \left(f_{i,\text{s}}^b m T^b-n\Delta f^b \tau_{i,0}\right)} \\ \mathbf{a}_{\text{Rx}}(\theta_{i,\text{Rx}})\mathbf{a}_{\text{Tx}}^{\text{T}}(\theta_{i,\text{Tx}})\end{array}\right],
\end{equation}
where $\mathbf{H}_{m,n}^b \in \mathbb{C}^{N_{\text{R}}\times N_\text{T}}$ and $\kappa_{i,S}^b=\sqrt{\frac{(\lambda^b)^2}{(4\pi)^3r_{i,0}^4}}\beta_{i,S}^b$ is the attenuation between the $i$-th target and BS with $\beta_{i,S}^b \sim \mathcal{CN}(0,\sigma_{\beta_{i,S}^b}^2)$ being the 
$i$-th target's RCS \cite{guo2023}, which characterizes the difference in the initial phase of high and low-frequency bands;
$f_{i,\text{s}}^b=\frac{2f_\mathrm{C}^b v_{i,0}}{c}$ and $\tau_{i,0}=\frac{2r_{i,0}}{c}$ are the Doppler frequency shift and delay, respectively. 

\subsubsection{Received sensing signal}
\cite{Mateo} verifies that the high and low-frequency band signals have similar behavior, that is, they reach the receiver (Rx) of BS through the same line of sight path and target reflection. At the Rx of BS, the mixed sensing echo signals on high and low-frequency bands are separated through matched filtering. Therefore, the received echo signal on the $n$-th subcarrier during the $m$-th OFDM symbol time is denoted by
\begin{equation} \label{eq12}
\begin{aligned} \mathbf{y}_{m,n}^S=\sum_{b=1}^{\mathcal{B}}\mathbf{y}_{m,n}^{S,b} =\sum_{b=1}^{\mathcal{B}}\mathbf{H}_{m,n}^b\mathbf{W}_\text{Tx}^b\mathbf{d}_{m,n}^b+\mathbf{z}_{m,n}^{S,b},   
\end{aligned}
\end{equation}
where $\mathbf{y}_{m,n}^S \in \mathbb{C}^{N_\text{R}\times 1}$ and $\mathbf{y}_{m,n}^{S,b}$ denotes the echo signal in $b$-th CC; $\mathbf{W}_\text{Tx}^b\in\mathbb{C}^{N_\text{T}\times N_\text{T}}$ represents the transmit beamforming matrix used to point in the direction of interest, and $\mathbf{d}_{m,n}^b\in \mathbb{C}^{N_\text{T}\times1}$ is the dedicated sensing data with constant mode 1; $\mathbf{z}_{m,n}^{S,b}  \sim \mathcal{CN}(0,\sigma_{S,b}^2)\in \mathbb{C}^{N_\text{R}\times 1}$ is an AWGN vector, which characterizes the different noise variances of high and low-frequency bands.

\section{A Novel CA-enabled MIMO-OFDM ISAC Signal Processing Method} \label{se3}
\begin{figure*}
 \centering    \includegraphics[width=0.95\textwidth]{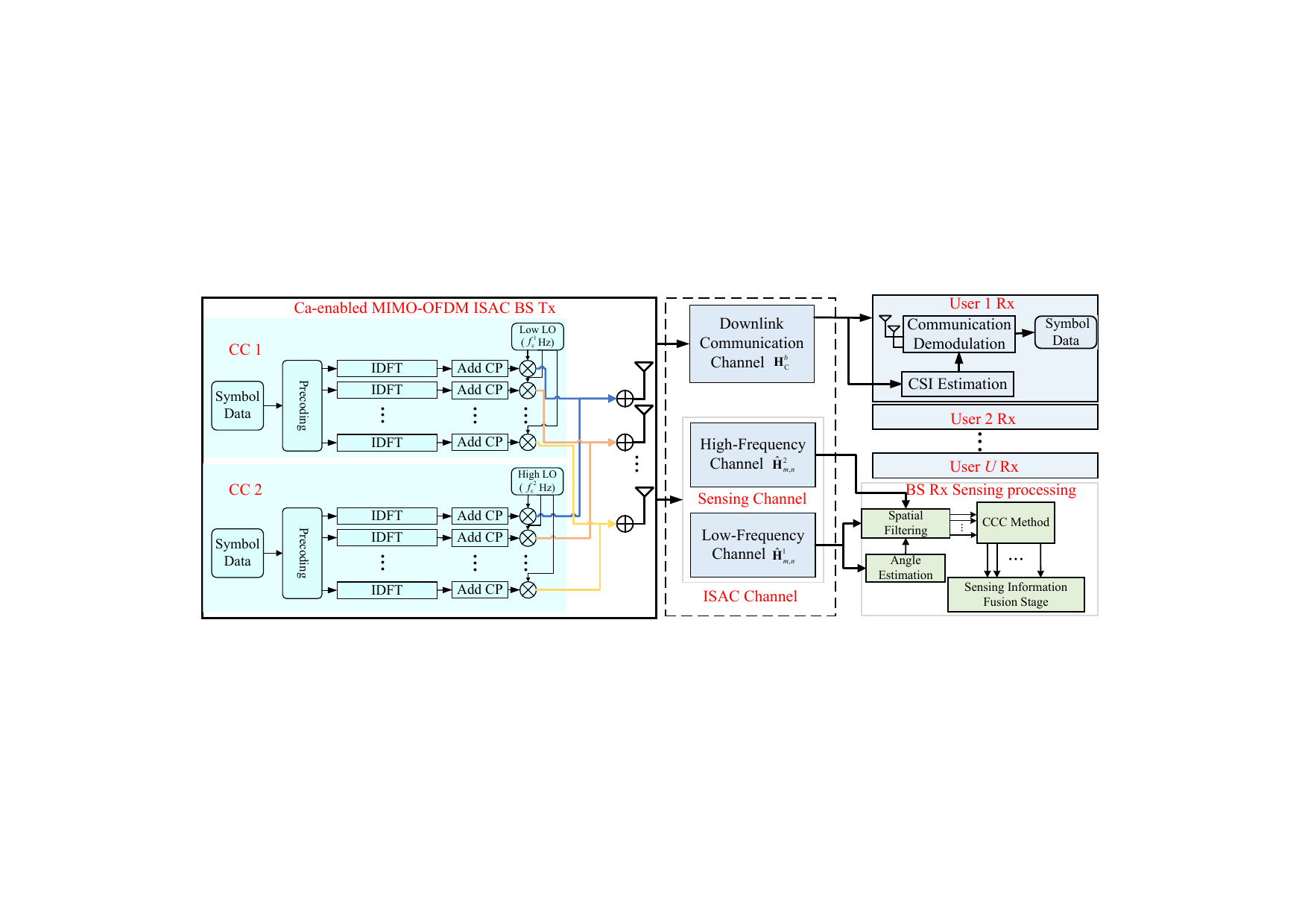}
    \caption{CA-enabled MIMO-OFDM ISAC signal processing.}
    \label{fig2}   
\end{figure*}
In this section, a novel CA-enabled MIMO-OFDM ISAC signal processing method is proposed, as illustrated in Fig. \ref{fig2}. Without loss of generality, we assume that the $1$-st CC is on the low-frequency band and the $2$-nd CC is on the high-frequency band. Through matched filtering at the Rx, the mixed sensing echo signals on the two CCs are separated.

There are two challenges in sensing signal processing within the proposed ISAC system: 1) The initial phase misalignment of the echo signals on high and low-frequency bands due to attenuation and RCS; 2) The fusion of the sensing data on high and low-frequency bands with different physical-layer parameters, such as subcarrier spacing and symbol duration. Therefore, we consider both data-level and symbol-level fusion methods of the sensing data on high and low-frequency bands.

The data-level fusion method is as follows: After the separation of mixed echo signals on high and low-frequency bands, the joint angle-range-velocity estimation method proposed in~\cite{xu2023bandwidth} is applied to the echo signal on high-frequency band to yield the estimated range set $\{r_{i,\text{data}}^1\}_{i=1,2,\cdots,I}$ and velocity set $\{v_{i,\text{data}}^1\}_{i=1,2,\cdots,I}$ of $I$ targets, which have a one-to-one relationship. Similarly, the estimated range set $\{r_{i,\text{data}}^2\}_{i=1,2,\cdots,I}$ and velocity set $\{v_{i,\text{data}}^2\}_{i=1,2,\cdots,I}$ of $I$ targets are obtained by processing the echo signal on low-frequency band. We combine the difference of accuracy and SNR to give the data-level fusion weights, and the variances of received signal on high and low-frequency bands can be obtained by some methods, such as maximum likelihood estimation (MLE)~\cite{Morelli}. Therefore, the final estimation for range of the $i$-th target is $\hat{r}_{i,\text{data}}=r_{i,\text{data}}^1+ \frac{\sigma_{S,1}^2 \frac{\Delta f^2 M_b}{\Delta f^1}}{\sigma_{S,1}^2 \frac{\Delta f^2 M_b}{\Delta f^1}+\sigma_{S,2}^2}(r_{i,\text{data}}^2-r_{i,\text{data}}^1)$, while the final estimation for velocity of the $i$-th target is $\hat{v}_{i,\text{data}}=v_{i,\text{data}}^1+\frac{\sigma_{S,1}^2\frac{\Delta f^2 N_b}{\Delta f^1}}{\sigma_{S,1}^2\frac{\Delta f^2 N_b}{\Delta f^1}+\sigma_{S,2}^2}(v_{i,\text{data}}^2-v_{i,\text{data}}^1)$~\cite{chen2024downlink}. However, the sensing accuracy of data-level fusion method is low.
Therefore, this paper proposes a two-stage symbol-level fusion method, including signal preprocessing stage and sensing information fusion stage, detailed in Section \ref{se3-A}.
\subsection{Signal Preprocessing Stage}\label{se3-A}
One primary purposes of sensing preprocessing is to align the initial phases of the echo signals on high and low-frequency bands.

\subsubsection{Angle estimation}\label{se3-A-1}
With identical AoA and AoD for two CCs and high signal-to-noise ratio (SNR) of the echo signal on low-frequency band, we utilize the echo signal on low-frequency band to estimate AoA.

Firstly, we remove the known transmit data $\mathbf{d}_{m,n}^1$ in $\mathbf{y}_{m,n}^{S,1}$ by right-multiplying the generalized inverse matrix $\mathbf{X}_{m,n}^\dagger = \left(\mathbf{d}_{m,n}^1\right)^\text{H}\left(\mathbf{d}_{m,n}^1 \left(\mathbf{d}_{m,n}^1\right)^\text{H}+\varrho\mathbf{I}\right)^{-1}$, and the echo signal on low-frequency band by all antennas is
\begin{equation}  \label{eq13} 
{\fontsize{9}{9}\begin{aligned}
\hat{\mathbf{H}}_{m,n}^1&=\mathbf{y}_{m,n}^{S,1}\mathbf{X}_{m,n}^\dagger \\ &
\approx \sum_{i=1}^I\left[\begin{array}{l}
\kappa_{i,S}^1 e^{j2\pi \left(f_{i,\text{s}}^1 m T^1-n\Delta f^1 \tau_{i,0}\right)} \\\mathbf{a}_{\text{Rx}}(\theta_{i,\text{Rx}})\mathbf{a}_{\text{Tx}}^{\text{T}}(\theta_{i,\text{Tx}})\mathbf{I}\end{array}\right] \mathbf{W}_\text{Tx}^b+\mathbf{z}_{m,n}^{S,1}\mathbf{X}_{m,n}^\dagger,
\end{aligned}}
\end{equation}
where $\hat{\mathbf{H}}_{m,n}^1 \in \mathbb{C}^{N_\text{R}\times N_\text{T}}$ and $\mathbf{I} \in \mathbb{C}^{N_\text{T}\times N_\text{T}}$ is the identity matrix. 
$\varrho$ denotes a regularization parameter introduced to improve the accuracy of the numerical recovery~\cite{wei2024integrated}.

Upon observing $\hat{\mathbf{H}}_{m,n}^1$, it is obvious that the AoA introduces 
a linear phase shift along the receive antenna elements. The AoA estimations can be obtained by multiple signal classification (MUSIC) method, estimating signal parameter via rotational invariance techniques method, and the advanced methods based on complex neural network~\cite{wei2024deep,Naoumi}. Due to its ability to achieve parameter estimation in non-contiguous signals in the space-time-frequency-code domains with high precision,  we take the MUSIC method as an example~\cite{schmidt1986multiple,chen2024downlink}, with the following steps.

\textit{Step 1:} Calculate the covariance matrix $\mathbf{R}_{\hat{\mathbf{H}}_{m,n}^1}$ of  $\hat{\mathbf{H}}_{m,n}^1$.
\begin{equation} \label{eq14}
\mathbf{R}_{\hat{\mathbf{H}}_{m,n}^1}=\frac{\hat{\mathbf{H}}_{m,n}^1\left[\hat{\mathbf{H}}_{m,n}^1\right]^{\text{H}}}{N_1M_1}\in \mathbb{C}^{N_\text{R}\times N_\text{R}}.
\end{equation}

\textit{Step 2:} $\mathbf{R}_{\hat{\mathbf{H}}_{m,n}^1}$ is performed an eigenvalue decomposition $\text{eig}(\cdot)$ to obtain 
\begin{equation} \label{eq15}
\text{eig}\left(\mathbf{R}_{\hat{\mathbf{H}}_{m,n}^1}\right)=\left[\mathbf{U}_\text{s},\mathbf{\Lambda}_\text{s}\right],
\end{equation}
where $\mathbf{\Lambda}_\text{s}$ denotes a diagonal matrix with descending order of eigenvalues. $\mathbf{U}_\text{s}$ is the orthogonal eigenmatrix.
The number of incident signals can be accurately ascertained by estimating the rank of the signal space from the differential vector derived from the eigenvalues of the signal matrix~\cite{chen2023multiple}, which are assumed to be the same as the number of targets.
Then, the noise subspace $\mathbf{U}_\text{n}=\left[\mathbf{U}_\text{s}\right]_{:,I+1:N_\text{R}}$ is obtained, which is used to calculate the spatial spectral function.

\textit{Step 3:} Generate a MUSIC spatial spectral vector 
\begin{equation} \label{eq16}
    f_{\text{music}}(\theta)=\frac{1}{\mathbf{a}(\theta)^\text{H}\mathbf{U}_\text{n}\mathbf{U}_\text{n}^{\text{H}}\mathbf{a}(\theta)},
\end{equation}
where $\theta \in \left(0, \pi\right]$ and
\begin{equation}\label{eq17}
    \mathbf{a}(\theta)=\left[e^{j2\pi p (\frac{d_\text{r}}{\lambda^1})\sin(\theta)}\right]^\text{T} |_{p=0,1,\cdots,N_\text{R}-1}.
\end{equation}

\textit{Step 4:} Search the $I$ peaks of $f_{\text{music}}$ to obtain the AoA estimations set $\{\hat{\theta}_{i',\text{Rx}}\}_{i'=1,2\cdots,I}$.

\subsubsection{Spatial filtering}\label{se3-A-2}
To mitigate target ghosts introduced by potential nonlinear processing, we employ spatial filtering operation to separate the echo signals that are blended from multiple targets.

Similar to (\ref{eq13}), after the known transmit data $\mathbf{d}_{m,n}^b$ is removed, the echo signal by all antennas on the $n$-th subcarrier during the $m$-th OFDM symbol time in the $b$-th CC is expressed as
\begin{equation}\label{eq18}
  \hat{\mathbf{H}}_{m,n}^{b}\approx \sum_{i=1}^I\left[\begin{array}{l}
\kappa_{i,S}^b e^{j2\pi \left(f_{i,\text{s}}^b m T^b-n\Delta f^b \tau_{i,0}\right)} \\\mathbf{a}_{\text{Rx}}(\theta_{i,\text{Rx}})\mathbf{a}_{\text{Tx}}^{\text{T}}(\theta_{i,\text{Tx}})\mathbf{I}\end{array}\right]\mathbf{W}_\text{Tx}^b+\mathbf{Z}_{m,n}^{S,b},
\end{equation}
where $\mathbf{Z}_{m,n}^{S,b}$ is a noise matrix. 

Similar to the principle of OFDM signal demodulation, we can separate the individual echo signal of each target from $\hat{\mathbf{H}}_{m,n}^{b}$ with the estimated AoAs $\{\hat{\theta}_{i',\text{Rx}}\}_{i'=1,2\cdots,I}$. The echo signal of the $i'$-th target by all antennas on the $n$-th subcarrier during the $m$-th OFDM symbol time in the $b$-th CC is expressed in (\ref{eq19}). When the target angle difference is greater than the angular resolution of the antenna, and the SNR is not poor, the magnitude of the \textbf{Term 1} is much greater than that of the \textbf{Term 2}. Fortunately, the spatial filtering operation can confer substantial SNR gains in space-domain.
\begin{figure*}
  \begin{equation}\label{eq19}
   \begin{aligned}
     \hat{\mathbf{H}}_{m,n}^{b,i'}&= \mathbf{a}_\text{Rx}^\text{H}(\hat{\theta}_{i',\text{Rx}})\hat{\mathbf{H}}_{m,n}^{b}(\mathbf{W}_\text{Tx}^b)^\text{H}\mathbf{a}_\text{Tx}^\text{*}(\hat{\theta}_{i',\text{Tx}})/(N_\text{R}N_\text{T}) \\ &
     \approx \underbrace{\kappa_{i',S}^b e^{j2\pi \left(f_{i',\text{s}}^b m T^b-n\Delta f^b \tau_{i',0}\right)}}_{\text{The}\ i' \text{-th target} \ (\textbf{Term 1})} \\ &
     \quad+ \underbrace{\mathbf{a}_\text{Rx}^\text{H}(\hat{\theta}_{i',\text{Rx}})\left[\sum_{i=1,i\neq i'}^{I}\left[\begin{array}{l}
\kappa_{i,S}^b e^{j2\pi \left(f_{i,\text{s}}^b m T^b-n\Delta f^b \tau_{i,0}\right)} \\\mathbf{a}_{\text{Rx}}(\theta_{i,\text{Rx}})\mathbf{a}_{\text{Tx}}^{\text{T}}(\theta_{i,\text{Tx}})\mathbf{I}\end{array}\right]\mathbf{W}_\text{Tx}^b+\mathbf{Z}_{m,n}^{S,b}\right](\mathbf{W}_\text{Tx}^b)^\text{H}\mathbf{a}_\text{Tx}^\text{*}(\hat{\theta}_{i',\text{Tx}})/(N_\text{R}N_\text{T}).}_{\text{Interference}\ (\textbf{Term 2})}
    \end{aligned}
\end{equation} 
{\noindent} \rule[-10pt]{18cm}{0.1em}
\end{figure*}

According to (\ref{eq19}), the echo signal of the $i'$-th target on $N_b$ subcarriers during $M_b$ OFDM symbol times in the $b$-th CC is rewritten in matrix form as
\begin{equation} \label{eq20}  \mathbf{D}_{i'}^{S,b}=\mathbf{S}_{i'}^{S,b}+\mathbf{Z}_{i'}^{S,b},
\end{equation}
where $\mathbf{Z}_{i'}^{S,b}$ is an interference matrix including the interference from other target and noise, and $\mathbf{S}_{i'}^{S,b} \in \mathbb{C}^{N_b \times M_b}$ denotes a delay-Doppler information matrix, expressed in (\ref{eq21}).
\begin{figure*}
\begin{equation}
    \centering
   \label{eq21}    \mathbf{S}_{i'}^{S,b}=\kappa_{i',S}^b\left[
    \begin{array}{cccc}
     1 & e^{j2\pi f_{i',\text{s}}^b T^b} & \cdots & e^{j2\pi (M_b-1) f_{i',\text{s}}^b  T^b} \\
      e^{-j2\pi \Delta f^b \tau_{i',0}} & e^{-j2\pi \Delta f^b \tau_{i',0}} e^{j2\pi f_{i',\text{s}}^b T^b} & \cdots &  e^{-j2\pi \Delta f^b \tau_{i',0}} e^{j2\pi (M_b-1) f_{i',\text{s}}^b T^b}   \\
     \vdots & \vdots &  \ddots & \vdots \\
      e^{-j2\pi (N_b-1) \Delta f^b \tau_{i',0}} & e^{-j2\pi (N_b-1) \Delta f^b \tau_{i',0}}e^{j2\pi f_{i',\text{s}}^b T^b} & \cdots & e^{-j2\pi (N_b-1) \Delta f^b \tau_{i',0}}e^{j2\pi (M_b-1) f_{i',\text{s}}^b T^b} \\
    \end{array} \right].
    \end{equation}
\end{figure*}

\subsubsection{CCC method}
For the $i'$-th target, observing (\ref{eq21}), the initial phases of $\mathbf{S}_{i'}^{S,b}$ on high and low-frequency bands are not aligned and there exists a different complex number $\kappa_{i',S}^b$. To this end, a CCC method~\cite{wei2023symbol} is performed to eliminate the complex number $\kappa_{i',S}^b$. The procedure for applying the CCC method to $\mathbf{D}_{i'}^{S,b}$ is presented in the following steps.

\textit{Step 1:} $\mathbf{D}_{i'}^{S,b}$ is divided into $N_b$ row vectors, denoted by
\begin{equation} \label{eq27}  
\mathbf{D}_{i'}^{S,b}=\left[\left(\mathbf{d}_{i',0}^{S,b}\right)^{\text{T}},\left(\mathbf{d}_{i',1}^{S,b}\right)^\text{T},\cdots,\left(\mathbf{d}_{i',N_b-1}^{S,b}\right)^\text{T}\right]^\text{T},
\end{equation}
where $\mathbf{d}_{i',n}^{S,b} \in \mathbb{C}^{1 \times M_b}$ represents the $n$-th row vector of $\mathbf{D}_{i'}^{S,b}$. 

\textit{Step 2:} A delay feature vector $\mathbf{r}_{i',S}^b \in \mathbb{C}^{N_b \times 1}$ is obtained by accumulating the results of conjugate multiplication between the row vectors, denoted by (\ref{eq28}), where $\mathbf{z}_{i'}^{b}$ is an interference vector.
\begin{figure*}
    \begin{equation}\label{eq28}
    {\fontsize{8}{8}
    \begin{aligned}   
     \mathbf{r}_{i',S}^b&=\frac{1}{M_b}\left[\frac{1}{N_b}\sum_{a=0}^{N_b-1}\mathbf{d}_{i',a}^{S,b}\left(\mathbf{d}_{i',a}^{S,b}\right)^\text{H}, \frac{1}{N_b-1}\sum_{a=0}^{N_b-1-1}\mathbf{d}_{i',a+1}^{S,b}\left(\mathbf{d}_{i',a}^{S,b}\right)^\text{H}, \cdots, \frac{1}{N_b-n}\sum_{a=0}^{N_b-n-1}\mathbf{d}_{i',a+n}^{S,b}\left(\mathbf{d}_{i',a}^{S,b}\right)^\text{H}, \cdots, \mathbf{d}_{i',N_b-1}^{S,b}\left(\mathbf{d}_{i',0}^{S,b}\right)^\text{H}\right]^\text{T} \\
     &=\left[1, e^{-j2\pi \Delta f^b \tau_{i',0}}, \cdots, e^{-j2\pi n\Delta f^b \tau_{i',0}}, \cdots, e^{-j2\pi (N_b-1) \Delta f^b \tau_{i',0}}\right]^\text{T} +\mathbf{z}_{i'}^{b}.
     \end{aligned}}
    \end{equation}
\end{figure*}
To guarantee that the SNR of $\mathbf{r}_{i',S}^b$ is evenly distributed, weighting averaging is needed to obtain the final feature vector $\mathbf{g}_{i',S}^b=\mathbf{r}_{i',S}^b\odot\mathbf{w}_r$, where $\mathbf{w}_r=[\frac{N_b}{N_b(N_b+1)/2},\frac{N_b-1}{N_b(N_b+1)/2},\cdots,\frac{1}{N_b(N_b+1)/2}]^{\text{T}}$ is the weight vector.

\textit{Step 3:} $\mathbf{D}_{i'}^{S,b}$ is divided into $M_b$ column vectors, expressed as
\begin{equation} \label{eq29}  
\mathbf{D}_{i'}^{S,b}=\left[\mathbf{c}_{i',0}^{S,b},\mathbf{c}_{i',1}^{S,b},\cdots,\mathbf{c}_{i',M_b-1}^{S,b}\right],
\end{equation}
where $\mathbf{c}_{i',m}^{S,b} \in \mathbb{C}^{N_b \times 1}$ denotes the $m$-th column vector of $\mathbf{D}_{i'}^{S,b}$. 

\textit{Step 4:} A Doppler feature vector $\mathbf{v}_{i',S}^b \in \mathbb{C}^{1 \times M_b}$ is obtained by accumulating the result of conjugate multiplication between the column vectors, expressed as (\ref{eq30}).
\begin{figure*}
    \begin{equation}\label{eq30}
    {\fontsize{8}{8}
    \begin{aligned}
  \mathbf{v}_{i',S}^b&=\frac{1}{N_b}\left[\frac{1}{M_b}\sum_{a=0}^{M_b-1}\left(\mathbf{c}_{i',a}^{S,b}\right)^\text{H}\mathbf{c}_{i',a}^{S,b}, \frac{1}{M_b-1}\sum_{a=0}^{M_b-1-1}\left(\mathbf{c}_{i',a}^{S,b}\right)^\text{H}\mathbf{c}_{i',a+1}^{S,b}, \cdots, \frac{1}{M_b-n}\sum_{a=0}^{M_b-n-1}\left(\mathbf{c}_{i',a}^{S,b}\right)^\text{H}\mathbf{c}_{i',a+n}^{S,b}, \cdots, \left(\mathbf{c}_{i',0}^{S,b}\right)^\text{H}\mathbf{c}_{i',N_b-1}^{S,b}\right] \\ 
     &= \left[1, e^{j2\pi \left(\frac{2 f_{\mathrm{C}}^bv_{i',0}}{c}\right) T^b}, \cdots, e^{j2\pi m \left(\frac{2 f_{\mathrm{C}}^bv_{i',0}}{c}\right) T^b}, \cdots, e^{j2\pi (M_b-1) \left(\frac{2 f_{\mathrm{C}}^bv_{i',0}}{c}\right) T^b}\right]+ \mathbf{\hat{z}}_{i'}^{b}.
     \end{aligned}}
    \end{equation}
     {\noindent} \rule[-10pt]{18cm}{0.1em}
\end{figure*}
Similar to delay feature vector, the final Doppler vector is $\mathbf{e}_{i',S}^b=\mathbf{v}_{i',S}^b\odot\mathbf{w}_v$, where $\mathbf{w}_v=[\frac{M_b}{M_b(M_b+1)/2},\frac{M_b-1}{M_b(M_b+1)/2},\cdots,\frac{1}{M_b(M_b+1)/2}]$ is the weight vector. 
After the above steps, the feature vector sets of the $I$ targets are obtained, namely $\{\mathbf{g}_{i',S}^1\}_{i'=1,2,\cdots,I}$, $\{\mathbf{e}_{i',S}^1\}_{i'=1,2,\cdots,I}$, $\{\mathbf{g}_{i',S}^2\}_{i'=1,2,\cdots,I}$, and $\{\mathbf{e}_{i',S}^2\}_{i'=1,2,\cdots,I}$.

\subsection{Sensing Information Fusion Stage}\label{sec3-B}
In the sensing information fusion stage, we fuse the feature vectors obtained in the signal preprocessing stage by the symbol-level fusion and estimate the parameters of $I$ targets, which include the fusion algorithm of delay feature vectors and the fusion algorithm of Doppler feature vectors.
Without loss of generality, the following remarks need to be specified.
\begin{itemize}
    \item \label{item:1} The subcarrier spacing is greater than ten times the Doppler frequency shift to ensure the orthogonality of OFDM signal~\cite{Braun2009paramether}. The Doppler frequency shift is related to the carrier frequency, so that the subcarrier spacing is proportional to the carrier frequency and we assume that $\frac{\Delta f^2}{\Delta f^1}=\left \lfloor \frac{f_{\mathrm{C}}^2}{f_{\mathrm{C}}^1} \right \rfloor = Q \ (Q>1) $, where $\left \lfloor \cdot \right \rfloor $ is flood function.
    \item \label{item:2} Upon observing Fig. \ref{fig2}, the two CCs use discrete Add CP modules, allowing for the adjustment of CP lengths. It is imperative to ensure that the maximum delay does not surpass the length of the shortest CP~\cite{Braun2009paramether,wei2023carrier}.
\end{itemize}

\subsubsection{Fusion algorithm of delay feature vectors}
Taking the $\mathbf{g}_{i',S}^1$ and $\mathbf{g}_{i',S}^2$ of the $i'$-th target as an example. Without considering the interference, observing (\ref{eq28}), the difference between $\mathbf{g}_{i',S}^1$ and $\mathbf{g}_{i',S}^2$ is the subcarrier spacing, which makes the symbol-level fusion difficult.
According to $\frac{\Delta f^2}{\Delta f^1}=Q $,  $\mathbf{g}_{i',S}^1$ and $\mathbf{g}_{i',S}^2$ are rewritten as
\begin{align}
\mathbf{g}_{i',S}^1&=\left[e^{-j2\pi n \Delta f^1 \tau_{i',0}}\right]^\text{T} |_{n=0,1,\cdots,N_1-1},\label{eq31} \\
\mathbf{g}_{i',S}^2&=\left[e^{-j2\pi nQ \Delta f^1 \tau_{i',0}}\right]^\text{T} |_{n=0,1,\cdots,N_2-1}.\label{eq32}
\end{align}
(\ref{eq32}) is further rewritten as 
\begin{equation}\label{eq33}
\hat{\mathbf{g}}_{i',S}^2=\left[e^{-j2\pi n'' \Delta f^1 \tau_{i',0}}\right]^\text{T}\in \mathbb{C}^{N_2 \times 1} |_{n''=0,Q,\cdots,Q(N_2-1)}.
\end{equation}
Then, the subcarrier spacings in (\ref{eq31}) and (\ref{eq33}) are identical. According to the MRC principle, the delay feature vectors are multiplied by the weight to carry out the subsequent symbol-level fusion, and $\tilde{\mathbf{r}}_{i',S}^1=\frac{\sigma_{S,2}^2}{\sigma_{S,1}^2+\sigma_{S,2}^2}\mathbf{g}_{i',S}^1$ and $\tilde{\mathbf{r}}_{i',S}^2=\frac{\sigma_{S,1}^2}{\sigma_{S,1}^2+\sigma_{S,2}^2}\hat{\mathbf{g}}_{i',S}^2$ are obtained. 
The fusion process of delay feature vectors is discussed below in two cases.
 
\textbf{\textit{Case 1 $(N_1-1 < Q(N_2-1))$}:} Initialize a zero-element vector $\mathbf{p}_{i',S}^1 \in \mathbb{C}^{QN_2 \times 1}$ and an index value $\xi_1 \in \{0,1,\cdots,QN_2-1\}$. Then, we traverse the element of $\mathbf{p}_{i',S}^1$ with $\xi_1$. When $\xi_1$ equals $n$, the element $\tilde{\mathbf{r}}_{i',S}^1(n)$ is assigned to $\mathbf{p}_{i',S}^1(\xi_1)$. When $\xi_1$ equals $n''$, the element $\tilde{\mathbf{r}}_{i',S}^2(n'')$ is assigned to $\mathbf{p}_{i',S}^1(\xi_1)$. Given that the zero elements in $\mathbf{p}_{i',S}^1$ deteriorate the Fourier sidelobes, which can be mitigated by recovering the missing sensing data with the CS-based sensing method proposed in~\cite{liu2023isac}. The final processed vector is denoted by $\hat{\mathbf{p}}_{i',S}^1$.

\textbf{\textit{Case 2 $\left(N_1-1\ge Q(N_2-1)\right)$}:} Initialize a zero-element vector $\mathbf{p}_{i',S}^2 \in \mathbb{C}^{N_1 \times 1}$ and an index value $\xi_2 \in \{0,1,\cdots,N_1-1\}$. Then, we traverse the element of $\mathbf{p}_{i',S}^2$ with $\xi_2$. When $\xi_2$ equals $n$, the element $\tilde{\mathbf{r}}_{i',S}^1(n)$ is assigned to $\mathbf{p}_{i',S}^2(\xi_2)$. When $\xi_2$ equals $n''$, the element $\tilde{\mathbf{r}}_{i',S}^2(n'')$ is assigned to $\mathbf{p}_{i',S}^2(\xi_2)$.

Taking the $\hat{\mathbf{p}}_{i',S}^1$ in \textit{\textbf{Case 1}} as an example, an improved IDFT method is used to achieve a high-accuracy sensing. The specific procedures are as follows, while the \textit{\textbf{Case 2}} has the similar procedures.

\textit{Step 1:} Obtain a searching interval $\left[R_\text{min}, R_\text{max}\right]$ based on the sensing demands in the application scenario of ISAC~\cite{wei2023multiple}.

\textit{Step 2:} Generate a distance searching vector $\mathbf{a}$ by gridding the searching interval with a size of grid being $\Delta R= \frac{R_\text{max}-R_\text{min}}{J}$, where $J$ is the number of girds and $\mathbf{a}\in \mathbb{C}^{J \times 1}$ is denoted by
    \begin{equation} \label{eq34}
        \mathbf{a}=\left[ R_1, R_2, \cdots, R_{\eta}, \cdots, R_J\right]^{\text{T}},
    \end{equation}
    with $\eta \in \{1,2,\cdots,J\}$ denoting the index of grids.

\textit{Step 3:} According to the size of $\hat{\mathbf{p}}_{i',S}^1$ and the expression form of delay feature vectors in (\ref{eq31}), the $\mathbf{a}$ is transformed to a searching matrix $\mathbf{A}_1 \in \mathbb{C}^{J \times QN_2}$, which is expressed as 
{\fontsize{8}{8}\begin{equation}\label{eq35}    
    \mathbf{A}_1=\left[
    \begin{array}{cccc}
     1 & e^{j2\pi  \Delta f^1 \frac{2R_1}{c}} & \cdots & e^{j2\pi (QN_2-1) \Delta f^1 \frac{2R_1}{c}} \\
     1 & e^{j2\pi  \Delta f^1 \frac{2R_2}{c}} & \cdots &  e^{j2\pi (QN_2-1) \Delta f^1 \frac{2R_2}{c}}  \\
     \vdots & \vdots &  \ddots & \vdots \\
     1 & e^{j2\pi  \Delta f^1 \frac{2R_J}{c}} & \cdots &  e^{j2\pi (QN_2-1) \Delta f^1 \frac{2R_J}{c}} \\
    \end{array} \right].
    \end{equation} }

\textit{Step 4:} Obtain and search the delay profile $\tilde{\mathbf{p}}_{i',1}=\mathbf{A}_1\hat{\mathbf{p}}_{i',S}^1$. The peak index of $\tilde{\mathbf{p}}_{i',1}$ is denoted by $\hat{\eta}_{i'}^1$.

\textit{Step 5:} The range estimation of the $i'$-th target is $\hat{r}_{i',0}^1=\mathbf{a}(\hat{\eta}_{i'}^1)$.

The fusion algorithm of delay feature vectors for the $i$-th target is shown in \hyperref[tab2]{\textbf{Algorithm 1}}, which is explained intuitively in Fig. \ref{fig3}. Finally, the estimated ranges of $I$ target are obtained.
\begin{figure}
    \centering
    \includegraphics[width=0.49\textwidth]{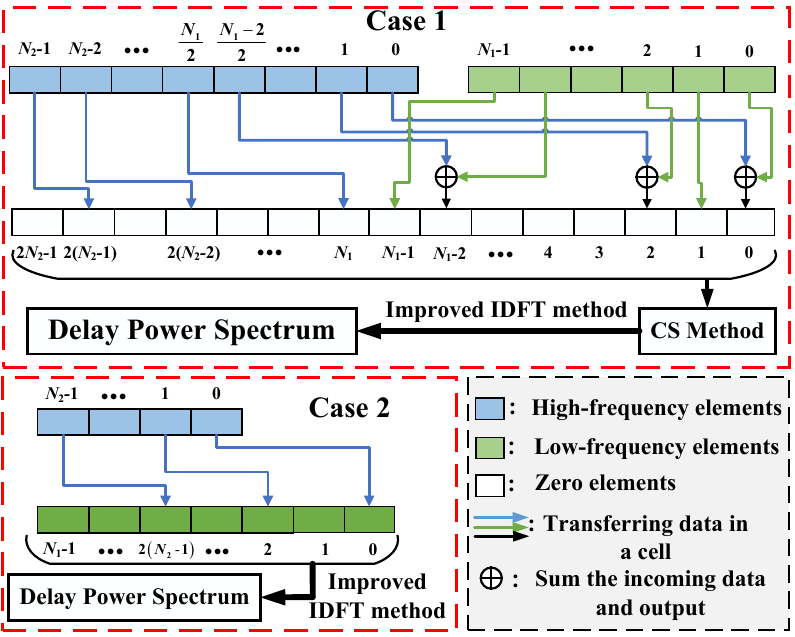}
    \caption{Fusion algorithm of
delay feature vectors when Q = 2.
}
    \label{fig3}
\end{figure}

\begin{table}[ht]
\centering
\label{tab2}
\resizebox{0.49\textwidth}{!}{
\setlength{\arrayrulewidth}{1.5pt}
\begin{tabular}{rllll}
\hline
\multicolumn{5}{l}{\textbf{Algorithm 1:} Fusion algorithm of Delay feature vectors}   \\ \hline
\multirow{-3}{*}{\textbf{Input:} }               & \multicolumn{4}{l}{\begin{tabular}[c]{@{}l@{}} Delay feature vectors of the $i'$-th target $\tilde{\mathbf{r}}_{i',S}^1$ and $\tilde{\mathbf{r}}_{i',S}^2$;\\ The number of subcarriers $N_1$ and $N_2$;\\ The subcarriers spacing $\Delta f^1$ and $\Delta f^2$. \end{tabular}} \\
\textbf{Output:}               & \multicolumn{4}{l}{The range estimations of the $i'$-th target $\hat{r}_{i',0}^1$ and $\hat{r}_{i',0}^2$.} \\ 
1:      & \multicolumn{4}{l}{$\textbf{if}$ $N_1-1 \ge Q(N_2-1)$ $\textbf{do}$}   \\
2:      & \multicolumn{4}{l}{$\hspace{1em}$ Assume a vector $\mathbf{p}_{i',S}^2 \in \mathbb{C}^{N_1\times 1}=\tilde{\mathbf{r}}_{i',S}^1$,}\\ &\multicolumn{4}{l}{$\hspace{1em}$ and an index value $\mu_2 \in \mathcal{N}_2=\{0,1,\cdots,N_2-1\}$;}   \\
3:      & \multicolumn{4}{l}{$\hspace{1em}$ $\textbf{for}$ $\mu_2$ in $\mathcal{N}_2$ $\textbf{do}$}   \\
4:       & \multicolumn{4}{l}{$\hspace{2em}$ The element value of $\tilde{\mathbf{r}}_{i',S}^2(\mu_2)$ is added to $\mathbf{p}_{i',S}^2(Q\mu_2)$;}   \\
5:       & \multicolumn{4}{l}{$\hspace{2em}$ $\mathbf{p}_{i',S}^2(Q\mu_2)$ =$\mathbf{p}_{i',S}^2(Q\mu_2)/2$;}  \\
6:       & \multicolumn{4}{l}{$\hspace{1em}$ $\textbf{end}$ $\textbf{for}$}  \\
7:      & \multicolumn{4}{l}{$\textbf{else if}$ $N_1-1 < Q(N_2-1)$ $\textbf{do}$}   \\
\multirow{-1}{*}{8:}        & \multicolumn{4}{l}{$\hspace{1em}$ Initialize a zero-element vector $\mathbf{p}_{i',S}^1 \in \mathbb{C}^{QN_2 \times 1}$,}\\ &\multicolumn{4}{l}{$\hspace{1em}$ and an index value $\mu_1 \in \mathcal{N}_1=\{0,1,\cdots,N_1-1\}$;} \\
9:      & \multicolumn{4}{l}{$\hspace{1em}$ $\textbf{for}$ $\mu_1$ in $\mathcal{N}_1$ $\textbf{do}$}   \\
10:        & \multicolumn{4}{l}{$\hspace{2em}$ The element value of $\tilde{\mathbf{r}}_{i',S}^1(\mu_1)$ is assigned to $\mathbf{p}_{i',S}^1(\mu_1)$;}  \\
11:        & \multicolumn{4}{l}{$\hspace{1em}$ $\textbf{end}$ $\textbf{for}$}  \\
12:        & \multicolumn{4}{l}{$\hspace{1em}$ Initialize an index value $\mu_3 \in \mathcal{N}_3=\{0,1,\cdots,N_2-1\}$;}  \\
13:       & \multicolumn{4}{l}{$\hspace{1em}$ $\textbf{for}$ $\mu_3$ in $\mathcal{N}_3$ $\textbf{do}$}   \\
14:       & \multicolumn{4}{l}{$\hspace{2em}$ The element value of $\tilde{\mathbf{r}}_{i',S}^2(\mu_3)$ is added to $\mathbf{p}_{i',S}^1(Q\mu_3)$;}   \\
15:       & \multicolumn{4}{l}{$\hspace{2em}$ $\textbf{if}$ $Q\mu_3 \le N_1-1 $ $\textbf{do}$}  \\
16:       & \multicolumn{4}{l}{$\hspace{3em}$ $\mathbf{p}_{i',S}^1(Q\mu_3)$ =$\mathbf{p}_{i',S}^1(Q\mu_3)/2$;}  \\
17:       & \multicolumn{4}{l}{$\hspace{2em}$ $\textbf{end}$ $\textbf{if}$}  \\
18:      & \multicolumn{4}{l}{$\hspace{1em}$ $\textbf{end}$ $\textbf{for}$}  \\
19:      & \multicolumn{4}{l}{$\hspace{1em}$ Recover the missing elements of $\mathbf{p}_{i',S}^1$ by }\\ & \multicolumn{4}{l}{$\hspace{1em}$ CS-based method proposed in~\cite{liu2023isac} to obtain $\hat{\mathbf{p}}_{i',S}^1$;}  \\
20:       & \multicolumn{4}{l}{$\textbf{end}$ $\textbf{if}$}  \\
21:      & \multicolumn{4}{l}{The results of $\hat{\mathbf{p}}_{i',S}^1$ and $\mathbf{p}_{i',S}^2$ undergoing improved IDFT are}\\& \multicolumn{4}{l}{ searched for peaks to obtain the peak indices $\hat{\eta}_{i'}^1$ and $\hat{\eta}_{i'}^2$;}  \\
22:      & \multicolumn{4}{l}{Substitute $\hat{\eta}_{i'}^1$ and $\hat{\eta}_{i'}^2$ into (\ref{eq34}) to obtain the range}\\  & \multicolumn{4}{l}{  estimations of the $i'$-th target $\hat{r}_{i',0}^1$ and $\hat{r}_{i',0}^2$.}  \\
\hline
\end{tabular}}
\end{table}

\subsubsection{Fusion algorithm of Doppler feature vectors}
Taking the $\mathbf{e}_{i',S}^1$ and $\mathbf{e}_{i',S}^2$ as an example. Without considering the interference, observing (\ref{eq30}), the different parameters between $\mathbf{e}_{i',S}^1$ and $\mathbf{e}_{i',S}^2$ are carrier frequency and total symbol duration. (\ref{eq30}) is rewritten as
\begin{equation} \label{eq36}
\mathbf{e}_{i',S}^b=\left[e^{j2\pi m \frac{2 v_{i',0}}{c}\left(f_{\mathrm{C}}^bT^b\right) }\right]|_{m=0,1,\cdots,M_b-1},
\end{equation}
where $f_{\mathrm{C}}^bT^b=f_{\mathrm{C}}^b\left(\frac{1}{\Delta f^b}+T_\text{s}^b\right)$. If $f_{\mathrm{C}}^1T^1=f_{\mathrm{C}}^2T^2$, $\mathbf{e}_{i',S}^1$ and $\mathbf{e}_{i',S}^2$ can be coherently accumulated to obtain a SNR gain.

Fortunately, as shown in Section \ref{sec3-B}, the unique framework of the proposed CA-enabled MIMO-OFDM ISAC system supports adjustable durations of CPs in the two CCs. Meanwhile, \hyperref[theorem 1]{\textbf{Theorem 1}} proves that $f_{\mathrm{C}}^1T^1=f_{\mathrm{C}}^2T^2$.
\begin{theorem}\label{theorem 1}
    According to $\left \lfloor \frac{f_{\mathrm{C}}^2}{f_{\mathrm{C}}^1} \right \rfloor = Q$, we assume that $\frac{f_{\mathrm{C}}^2}{f_{\mathrm{C}}^1}=Q+\rho (\rho <1)$. $f_{\mathrm{C}}^1T^1=f_{\mathrm{C}}^2T^2$ can be satisfied when 
    \begin{equation} \label{eq37}
        N_\text{s}^1=\left(1+\frac{\rho}{Q}\right)\frac{N_1}{N_2}N_\text{s}^2+\frac{N_1\rho}{Q}.
    \end{equation}
\end{theorem}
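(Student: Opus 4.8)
The plan is to reduce the target identity $f_\mathrm{C}^1 T^1=f_\mathrm{C}^2 T^2$ to a single linear equation in the free parameter $N_\text{s}^1$ by unfolding the definitions of the symbol durations given below~(\ref{eq5}). First I would use $T^b=T_\text{ofdm}^b+T_\text{s}^b$ together with $\Delta f^b=1/T_\text{ofdm}^b$ and $T_\text{s}^b=(N_\text{s}^b T_\text{ofdm}^b)/N_b$ to write the total symbol duration in the compact form
\[
T^b=\frac{1}{\Delta f^b}\left(1+\frac{N_\text{s}^b}{N_b}\right),
\]
so that $f_\mathrm{C}^b T^b=\dfrac{f_\mathrm{C}^b}{\Delta f^b}\left(1+\dfrac{N_\text{s}^b}{N_b}\right)$. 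This isolates the adjustable CP parameter $N_\text{s}^b$ as the only tunable quantity in each band and turns the claimed identity into a relation between the two prefactors $f_\mathrm{C}^b/\Delta f^b$.

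Next I would substitute the two carrier-aggregation relations, namely the standing assumption $\Delta f^2=Q\,\Delta f^1$ and the hypothesis $f_\mathrm{C}^2=(Q+\rho)f_\mathrm{C}^1$, into the high-band prefactor to obtain $\dfrac{f_\mathrm{C}^2}{\Delta f^2}=\dfrac{f_\mathrm{C}^1}{\Delta f^1}\left(1+\dfrac{\rho}{Q}\right)$. Imposing $f_\mathrm{C}^1 T^1=f_\mathrm{C}^2 T^2$ and cancelling the common nonzero factor $f_\mathrm{C}^1/\Delta f^1$ then leaves
\[
1+\frac{N_\text{s}^1}{N_1}=\left(1+\frac{\rho}{Q}\right)\left(1+\frac{N_\text{s}^2}{N_2}\right).
\]
Solving this for $N_\text{s}^1$, multiplying through by $N_1$, and collecting the term proportional to $N_\text{s}^2$ against the term proportional to $\rho$ reproduces~(\ref{eq37}) exactly, which completes the derivation.

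Since the manipulation itself is routine, the delicate point is feasibility rather than algebra: I must argue that the prescribed $N_\text{s}^1$ is an \emph{admissible} CP length. Because $Q>1$ and $0\le\rho<1$, the right-hand side of~(\ref{eq37}) is strictly positive whenever $N_\text{s}^2\ge 0$, so a nonnegative $N_\text{s}^1$ always exists, and the discrete Add-CP modules of the proposed framework make it realizable. The residual caveat—which is where any practical slack in the exact equality $f_\mathrm{C}^1 T^1=f_\mathrm{C}^2 T^2$ is absorbed—is that $N_\text{s}^1$ should be an integer number of samples and must still satisfy the constraint that the maximum target delay not exceed the shortest CP; I would therefore read~(\ref{eq37}) together with this delay constraint and with rounding of $N_\text{s}^1$ to the nearest admissible integer. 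This feasibility check, rather than the cancellation, is the part I expect to require the most care in a fully rigorous statement.
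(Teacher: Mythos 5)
Your derivation is correct and follows essentially the same route as the paper's proof: both unfold $T^b=T_\text{ofdm}^b+T_\text{s}^b$ with $\Delta f^b=1/T_\text{ofdm}^b$ and $T_\text{s}^b=N_\text{s}^bT_\text{ofdm}^b/N_b$, substitute $\Delta f^2=Q\,\Delta f^1$ and $f_\mathrm{C}^2=(Q+\rho)f_\mathrm{C}^1$, and solve the resulting linear equation for $N_\text{s}^1$ (the paper merely passes through the intermediate relation $T_\text{s}^1=(Q+\rho)T_\text{s}^2+\rho/\Delta f^2$ before converting to $N_\text{s}$). Your closing feasibility remarks on integrality and the CP-versus-maximum-delay constraint are a sensible addition not present in the paper's proof, which only notes the delay constraint as a separate remark in Section~\ref{sec3-B}.
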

\begin{proof}
    Based on $\frac{f_{\mathrm{C}}^2}{f_{\mathrm{C}}^1}=\frac{\Delta f^2}{\Delta f^1}+\rho$, the relationship between $T_{\text{s}}^1$ and $T_{\text{s}}^2$ is expressed as
    \begin{equation} \label{eq38}{\fontsize{8}{8}
    \begin{aligned}
        f_{\mathrm{C}}^1\left(\frac{1}{\Delta f^1}+T_\text{s}^1\right) &=f_{\mathrm{C}}^2\left(\frac{1}{\Delta f^2}+T_\text{s}^2\right),\\
        \Rightarrow \left(\frac{f_{\mathrm{C}}^1}{\Delta f^1}+ f_{\mathrm{C}}^1T_\text{s}^1\right) &=f_{\mathrm{C}}^1\left(\frac{\Delta f^2}{\Delta f^1}+\rho\right)\left(\frac{1}{\Delta f^2}+T_\text{s}^2\right),\\
        \Rightarrow
        T_\text{s}^1&=(Q+\rho)T_\text{s}^2+\frac{\rho}{\Delta f^2}.
    \end{aligned} }
    \end{equation}
    Then, according to $T_\text{s}^b=\frac{N_\text{s}^bT_{\text{ofdm}}^b}{N_b}$, the relationship between $N_\text{s}^1$ and $N_\text{s}^2$ can be expressed as 
    \begin{equation} \label{eq39}
        \begin{aligned}
            \frac{N_\text{s}^1T_{\text{ofdm}}^1}{N_1}&=(Q+\rho)\frac{N_\text{s}^2T_{\text{ofdm}}^2}{N_2}+\frac{\rho}{\Delta f^2},\\
            \Rightarrow
            N_\text{s}^1&=(\frac{Q+\rho}{Q})\frac{N_1}{N_2}N_\text{s}^2+\frac{N_1\rho}{Q}.
        \end{aligned}
    \end{equation}
\end{proof}

Similar to the fusion of delay feature vectors, the $\tilde{\mathbf{v}}_{i',S}^1=\frac{\sigma_{S,2}^2}{\sigma_{S,1}^2+\sigma_{S,2}^2}\mathbf{e}_{i',S}^1$ and $\tilde{\mathbf{v}}_{i',S}^2=\frac{\sigma_{S,1}^2}{\sigma_{S,1}^2+\sigma_{S,2}^2}\mathbf{e}_{i',S}^2$ are obtained. The fusion process of Doppler feature vectors is revealed in the following three cases.

\textbf{\textit{Case 1 $(M_1 > M_2)$}:} Initialize a zero-element vector $\mathbf{f}_{i',S}^1 \in \mathbb{C}^{1 \times M_1}$ and complement $\tilde{\mathbf{v}}_{i',S}^2$ with zero to get the matrix $\hat{\mathbf{v}}_{i',S}^2 \in \mathbb{C}^{1 \times M_1}=[\tilde{\mathbf{v}}_{i',S}^2,0_{M_2},\cdots,0_{M_1-1}]$. Then, the processed vector $\mathbf{f}_{i',S}^1=(\tilde{\mathbf{v}}_{i',S}^1+\hat{\mathbf{v}}_{i',S}^2)/2$ is obtained.

\textbf{\textit{Case 2 $(M_1 < M_2)$}:} Initialize a zero-element vector $\mathbf{f}_{i',S}^2 \in \mathbb{C}^{1 \times M_2}$ and complement $\tilde{\mathbf{v}}_{i',S}^1$ with zero to get the matrix $\hat{\mathbf{v}}_{i',S}^1 \in \mathbb{C}^{1 \times M_2}=[\tilde{\mathbf{v}}_{i',S}^1,0_{M_1},\cdots,0_{M_2-1}]$. Then, the processed vector $\mathbf{f}_{i',S}^2=(\hat{\mathbf{v}}_{i',S}^1+\tilde{\mathbf{v}}_{i',S}^2)/2$ is obtained.

\textbf{\textit{Case 3 $(M_1= M_2=M)$}:} Initialize a zero-element vector $\mathbf{f}_{i',S}^3 \in \mathbb{C}^{1 \times M}$ and obtain a vector $\mathbf{f}_{i',S}^3=(\tilde{\mathbf{v}}_{i',S}^1+\tilde{\mathbf{v}}_{i',S}^2)/2$.

Taking the $\mathbf{E}_{i',S}^1$ in \textbf{\textit{Case 1}} as an example, an improved discrete Fourier transform (DFT) method is applied to achieve high-accuracy sensing. The specific procedures are as follows, while the \textbf{\textit{Case 2}} and \textbf{\textit{Case 3}} have the similar procedures. 

\textit{Step 1:} Obtain a searching interval $\left[V_\text{min},V_\text{max}\right]$ based on the sensing demands in the application scenario of ISAC~\cite{wei2023multiple}.

\textit{Step 2:} Generate a velocity searching vector $\mathbf{b}$ by gridding the searching interval with a size of grid being $\Delta V=\frac{V_\text{max}-V_\text{min}}{G}$, where $G$ is the number of grids and $\mathbf{b} \in \mathbb{C}^{1\times G}$ is expressed as
    \begin{equation} \label{eq40}
      \mathbf{b}=\left[V_1,V_2,
     \cdots,V_g,\cdots,V_G\right],
    \end{equation}
    with $g \in \{1,2,\cdots,G\}$ representing the index of grids.

\textit{Step 3:} According to the size of $\mathbf{f}_{i',S}^1$ and the expression form of Doppler feature vectors in (\ref{eq30}), $\mathbf{b}$ is transformed to a searching matrix $\mathbf{B}_1 \in \mathbb{C}^{M_1 \times G}$, which is expressed as 
    \begin{equation}\label{eq41} {
    \begin{aligned}
        \mathbf{B}_1=\left[
    \begin{array}{cccc}
     1 & e^{-j2\pi  f_\mathrm{C}^1 T^1 \frac{2V_1}{c}} & \cdots & e^{-j2\pi (M_1-1) f_\mathrm{C}^1 T^1 \frac{2V_1}{c}} \\
     1 & e^{-j2\pi f_\mathrm{C}^1 T^1 \frac{2V_2}{c}} & \cdots & e^{-j2\pi (M_1-1) f_\mathrm{C}^1 T^1 \frac{2V_2}{c}} \\
     \vdots & \vdots &  \ddots & \vdots \\
     1 & e^{-j2\pi f_\mathrm{C}^1 T^1 \frac{2V_G}{c}} & \cdots & e^{-j2\pi (M_1-1) f_\mathrm{C}^1 T^1 \frac{2V_G}{c}} \\
    \end{array} \right]^{\text{T}}.
    \end{aligned} }
    \end{equation}

\textit{Step 4:} Obtain and search the Doppler profile $\tilde{\mathbf{f}}_{i',1}=\mathbf{f}_{i',S}^1\mathbf{B}_1$. The peak index of $\tilde{\mathbf{f}}_{i',1}$ is denoted by $\hat{g}_{i'}^1$.

\textit{Step 5:} The velocity estimation of the $i'$-th target is $\hat{v}_{i',0}^1=\mathbf{b}(\hat{g}_{i'}^1)$.

The fusion algorithm of Doppler feature vectors for the $i'$-th target is shown in \hyperref[tab3]{\textbf{Algorithm 2}}, which is explained intuitively in Fig. \ref{fig4}. Finally, the estimated velocities of $I$ targets are obtained.
\begin{table}[!ht]
\centering
\label{tab3}
\resizebox{0.49\textwidth}{!}{
\setlength{\arrayrulewidth}{1.5pt}
\begin{tabular}{rllll}
\hline
\multicolumn{5}{l}{\textbf{Algorithm 2 :} Fusion algorithm of Doppler feature vectors}   \\ \hline
\multirow{-4}{*}{\textbf{Input:} }               & \multicolumn{4}{l}{\begin{tabular}[c]{@{}l@{}}Doppler feature vectors of the $i'$-th target $\tilde{\mathbf{v}}_{i',S}^1$ and $\tilde{\mathbf{v}}_{i',S}^2$;\\ The number of OFDM symbols $M_1$ and $M_2$;\\ The carrier frequencies $f_\mathrm{C}^1$ and $f_\mathrm{C}^2$;\\ The total OFDM duration $T^1$ and $T^2$. \end{tabular}} \\
\textbf{Output:}               & \multicolumn{4}{l}{The velocity estimations of the $i'$-th target $\hat{v}_{i',0}^1$, $\hat{v}_{i',0}^2$, and $\hat{v}_{i',0}^3$.} \\ 
1:      & \multicolumn{4}{l}{$\textbf{if}$ $M_1 > M_2$ $\textbf{do}$}   \\
2:        & \multicolumn{4}{l}{$\hspace{1em}$ Assume a vector $\mathbf{f}_{i',S}^1 \in \mathbb{C}^{1 \times M_1}= \tilde{\mathbf{v}}_{i',S}^1$,} \\  & \multicolumn{4}{l}{$\hspace{1em}$ and an index value $\omega_1 \in \mathcal{M}_1=\{0,1,\cdots, M_2-1\}$;} \\
3:      & \multicolumn{4}{l}{$\hspace{1em}$ $\textbf{for}$ $\omega_1$ in $\mathcal{M}_1$ $\textbf{do}$}   \\
4:      & \multicolumn{4}{l}{$\hspace{2em}$ The element value of $\tilde{\mathbf{v}}_{i',S}^2(\omega_1)$ is added to $\mathbf{f}_{i',S}^1(\omega_1)$;}   \\
5:        & \multicolumn{4}{l}{$\hspace{2em}$ $\mathbf{f}_{i',S}^1(\omega_1)=\mathbf{f}_{i',S}^1(\omega_1)/2$;}  \\
6:        & \multicolumn{4}{l}{$\hspace{1em}$ $\textbf{end}$ $\textbf{for}$}  \\
7:       & \multicolumn{4}{l}{$\textbf{else if}$ $M_1 < M_2$ $\textbf{do}$}   \\
8:       & \multicolumn{4}{l}{$\hspace{1em}$ Assume a vector $\mathbf{f}_{i',S}^2 \in \mathbb{C}^{1 \times M_2}= \tilde{\mathbf{v}}_{i',S}^2$,} \\  & \multicolumn{4}{l}{$\hspace{1em}$ and an index value $\omega_2 \in \mathcal{M}_2=\{0,1,\cdots, M_1-1\}$;}   \\
9:      & \multicolumn{4}{l}{$\hspace{1em}$ $\textbf{for}$ $\omega_2$ in $\mathcal{M}_2$ $\textbf{do}$}   \\
10:      & \multicolumn{4}{l}{$\hspace{2em}$ The element value of $\tilde{\mathbf{v}}_{i',S}^1(\omega_2)$ is added to $\mathbf{f}_{i',S}^2(\omega_2)$;}   \\
11:        & \multicolumn{4}{l}{$\hspace{2em}$ $\mathbf{f}_{i',S}^2(\omega_2)=\mathbf{f}_{i',S}^2(\omega_2)/2$;}  \\
12:        & \multicolumn{4}{l}{$\hspace{1em}$ $\textbf{end}$ $\textbf{for}$}  \\
13:       & \multicolumn{4}{l}{$\textbf{else if}$ $M_1 = M_2 =M$ $\textbf{do}$}   \\
14:       & \multicolumn{4}{l}{$\hspace{1em}$ Initialize a zero-elements vector $\mathbf{f}_{i',S}^3 \in \mathbb{C}^{1 \times M}$,} \\  & \multicolumn{4}{l}{$\hspace{1em}$ and an index value $\omega_3 \in \mathcal{M}_3=\{0,1,\cdots, M-1\}$;}   \\
15:      & \multicolumn{4}{l}{$\hspace{1em}$ $\textbf{for}$ $\omega_3$ in $\mathcal{M}_3$ $\textbf{do}$}   \\
16:      & \multicolumn{4}{l}{$\hspace{2em}$ $\mathbf{f}_{i',S}^3(\omega_3)=\left[\tilde{\mathbf{v}}_{i',S}^1(\omega_3)+\tilde{\mathbf{v}}_{i',S}^2(\omega_3)\right]/2$;}   \\
17:        & \multicolumn{4}{l}{$\hspace{1em}$ $\textbf{end}$ $\textbf{for}$}  \\

18:      & \multicolumn{4}{l}{$\textbf{end}$ $\textbf{if}$}  \\
19:      & \multicolumn{4}{l}{The results of $\mathbf{f}_{i',S}^1$, $\mathbf{f}_{i',S}^2$, and $\mathbf{f}_{i',S}^3$ undergoing improved DFT are}\\& \multicolumn{4}{l}{ searched for peaks to obtain the peak indices $\hat{g}_1$, $\hat{g}_2$, and $\hat{g}_3$;}  \\
20:      & \multicolumn{4}{l}{Substitute $\hat{g}_{i'}^1$, $\hat{g}_{i'}^2$, and $\hat{g}_{i'}^3$ into (\ref{eq40}) to obtain the velocity}\\  & \multicolumn{4}{l}{ estimations of 
 the $i'$-th target $\hat{v}_{i',0}^1$, $\hat{v}_{i',0}^2$, and $\hat{v}_{i',0}^3$.}  \\
\hline
\end{tabular}}
\end{table}
\begin{figure}[!ht]
    \centering
    \includegraphics[width=0.49\textwidth]{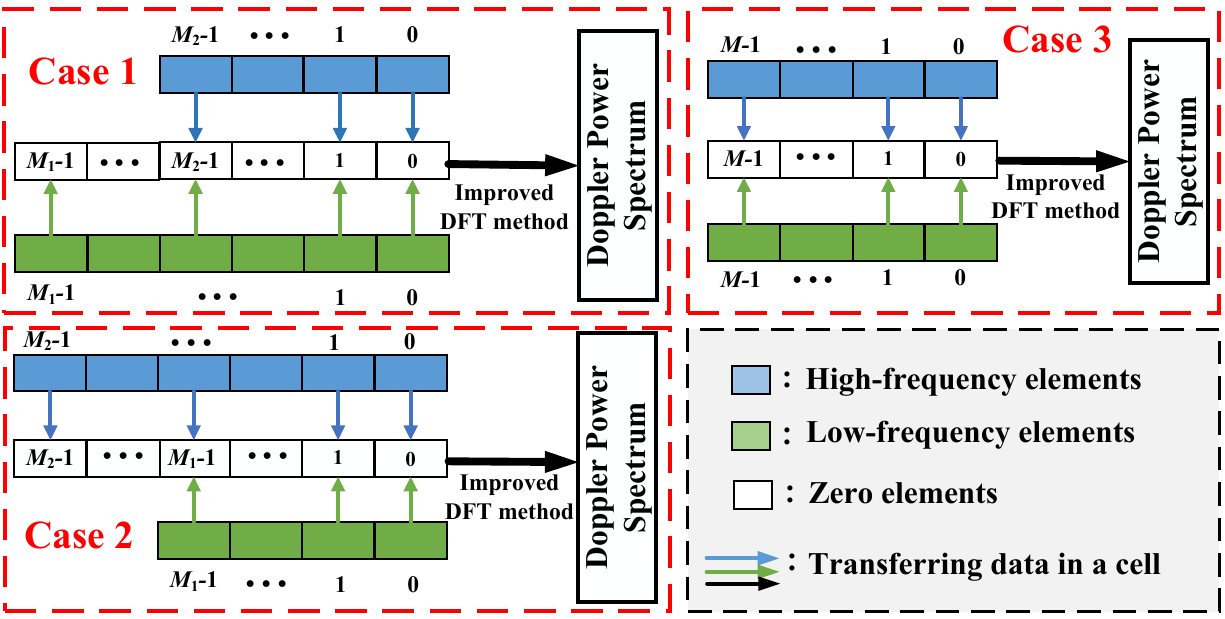}
    \caption{Fusion algorithm of Doppler feature vectors.}
    \label{fig4}
\end{figure}

\section{Performance Analysis} \label{se4}
In this section, communication MI is used to characterize the channel capacity of the proposed ISAC system, and CRLB is used as a performance metric of sensing for the proposed ISAC system~\cite{ouyang2023integrated}.

\subsection{Communication Performance Analysis} \label{se4-A}
In this section, the closed-form communication MI of the proposed CA-enabled MIMO-OFDM ISAC system is derived with the assumption of known communication CSI at both the BS and UE.
\subsubsection{Communication MI of MIMO-OFDM ISAC system}
The communication MI for MIMO-OFDM ISAC system is provided in \hyperref[lemma 1]{\textbf{Lemma 1}}.
\begin{lemma} \label{lemma 1}
  According to the signal model in (\ref{eq42}), the communication MI of the MIMO-OFDM ISAC system is obtained in (\ref{eq43}), with $N_x$, $N_c$, and $N_r$ denoting the number of OFDM symbols, subcarriers, and receive antennas, respectively \cite{Wei2023MI}.
    \begin{equation}\label{eq42}
        \mathbf{Y}_\text{com}=\mathbf{X}\mathbf{H}+\mathbf{W}_\text{com},
    \end{equation}
    \begin{equation}\label{eq43}
        {\fontsize{10}{10}\begin{aligned}
       &I(\mathbf{X};\mathbf{Y}_\text{com}{ | \mathbf{H})} \\
       &=N_x\mathrm{log}_2\left[\left(\sigma_\text{Z}^2\right)^{-N_rN_c}\det(\mathbf{H}^\text{H}\frac{E\{\mathbf{X}^\text{H}\mathbf{X}\}}{N_x}\mathbf{H}+\sigma_\text{Z}^2\mathbf{I}_{N_rN_c})\right],
       \end{aligned}}
    \end{equation}
    where $\mathbf{W}_\text{com} \sim \mathcal{CN}\left(0,\sigma_\text{Z}^2\right)$.  
\end{lemma}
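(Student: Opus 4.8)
The plan is to treat the model in (\ref{eq42}) as a conditionally Gaussian vector channel and to evaluate the mutual information through a difference of differential entropies. First I would write $I(\mathbf{X};\mathbf{Y}_\text{com}|\mathbf{H}) = h(\mathbf{Y}_\text{com}|\mathbf{H}) - h(\mathbf{Y}_\text{com}|\mathbf{X},\mathbf{H})$. The second term collapses immediately: conditioned on both $\mathbf{X}$ and $\mathbf{H}$, the only randomness in $\mathbf{Y}_\text{com}=\mathbf{X}\mathbf{H}+\mathbf{W}_\text{com}$ is the additive noise, so $h(\mathbf{Y}_\text{com}|\mathbf{X},\mathbf{H}) = h(\mathbf{W}_\text{com})$.

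Next I would invoke the differential entropy of a circularly-symmetric complex Gaussian vector with covariance $\mathbf{K}$, namely $\log_2\det(\pi e\,\mathbf{K})$. Since $\mathbf{W}_\text{com}\sim\mathcal{CN}(0,\sigma_\text{Z}^2)$ has i.i.d.\ entries, each of the $N_x$ OFDM symbols (rows) carries an independent $N_rN_c$-dimensional noise vector with covariance $\sigma_\text{Z}^2\mathbf{I}_{N_rN_c}$, giving $h(\mathbf{W}_\text{com}) = N_x\log_2\det(\pi e\,\sigma_\text{Z}^2\mathbf{I}_{N_rN_c})$.

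For the output entropy I would assume a Gaussian input, so that each row of $\mathbf{Y}_\text{com}$ is complex Gaussian, and compute its per-symbol covariance. Writing $\mathbf{x}_i$ for the $i$-th row of $\mathbf{X}$, the signal contribution of that row has covariance $\mathbf{H}^\text{H}E\{\mathbf{x}_i^\text{H}\mathbf{x}_i\}\mathbf{H}$; invoking that the $N_x$ symbols are identically distributed and using $\mathbf{X}^\text{H}\mathbf{X}=\sum_i\mathbf{x}_i^\text{H}\mathbf{x}_i$ identifies the common per-symbol input covariance as $E\{\mathbf{X}^\text{H}\mathbf{X}\}/N_x$. Adding the independent noise, the per-symbol output covariance is $\mathbf{K}_Y=\mathbf{H}^\text{H}\frac{E\{\mathbf{X}^\text{H}\mathbf{X}\}}{N_x}\mathbf{H}+\sigma_\text{Z}^2\mathbf{I}_{N_rN_c}$, and independence across symbols yields $h(\mathbf{Y}_\text{com}|\mathbf{H})=N_x\log_2\det(\pi e\,\mathbf{K}_Y)$.

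Finally I would subtract the two entropies: the $\pi e$ factors and the dimensional exponents cancel inside the ratio of determinants, leaving $I = N_x\log_2\frac{\det(\mathbf{K}_Y)}{(\sigma_\text{Z}^2)^{N_rN_c}} = N_x\log_2\bigl[(\sigma_\text{Z}^2)^{-N_rN_c}\det(\mathbf{K}_Y)\bigr]$, which is exactly (\ref{eq43}). I expect the main obstacle to be the covariance bookkeeping in the third step: one must justify both that the $N_x$ OFDM symbols are independent and identically distributed (so the entropy factorizes into the $N_x$ prefactor) and that the signal-plus-noise covariance telescopes into the compact form $\mathbf{H}^\text{H}\frac{E\{\mathbf{X}^\text{H}\mathbf{X}\}}{N_x}\mathbf{H}+\sigma_\text{Z}^2\mathbf{I}$; once these are pinned down, the remainder is the routine Gaussian-channel entropy computation.
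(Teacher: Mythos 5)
Your proposal is correct and follows essentially the same route the paper uses: the paper states Lemma~1 without proof (citing its source), but its proof of the CA generalization in Theorem~2 proceeds exactly as you do --- decomposing $I(\mathbf{X};\mathbf{Y}_\text{com}|\mathbf{H})$ into $h(\mathbf{Y}_\text{com}|\mathbf{H})-h(\mathbf{W}_\text{com})$, exploiting the block-diagonal structure so the conditional output density factorizes over symbols and subcarriers into circularly-symmetric Gaussians with per-symbol covariance $\mathbf{H}^\text{H}\frac{E\{\mathbf{X}^\text{H}\mathbf{X}\}}{N_x}\mathbf{H}+\sigma_\text{Z}^2\mathbf{I}$, and cancelling the $\pi e$ terms in the entropy difference. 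The covariance bookkeeping you flag as the main obstacle is handled in the paper in the same way, by assuming i.i.d.\ symbols and collapsing the product of per-subcarrier determinants into a single block-diagonal determinant.
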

\subsubsection{Communication MI of the proposed CA-enabled MIMO-OFDM ISAC system}
According to \cite{Wei2023MI} and (\ref{eq8}), the communication MI of the proposed CA-enabled MIMO-OFDM ISAC system is defined as
\begin{equation} \label{eq44}
I\left(\mathbf{X}_\mathrm{C}^1,\mathbf{X}_\mathrm{C}^2;\mathbf{Y}_u^\mathrm{C}\mid\mathbf{H}_\mathrm{C}^1,\mathbf{H}_\mathrm{C}^2\right).
\end{equation}
Since the frequency bands between the two CCs are far separated and transmit data independently, $\mathbf{X}_\mathrm{C}^1$ and $\mathbf{X}_\mathrm{C}^2$ are independent, as well as $\mathbf{H}_\mathrm{C}^1$ and $\mathbf{H}_\mathrm{C}^2$. Therefore, (\ref{eq44}) is rewritten as
\begin{equation} \label{eq45}
{\fontsize{8.5}{10}
\begin{aligned}
&I\left(\mathbf{X}_\mathrm{C}^1,\mathbf{X}_\mathrm{C}^2;\mathbf{Y}_u^\mathrm{C}\mid\mathbf{H}_\mathrm{C}^1,\mathbf{H}_\mathrm{C}^2\right)\\
&=I\left(\mathbf{X}_\mathrm{C}^{1};\mathbf{Y}_{u}^\mathrm{C}\mid\mathbf{H}_\mathrm{C}^{1},\mathbf{H}_\mathrm{C}^{2}\right)+I\left(\mathbf{X}_\mathrm{C}^{2};\mathbf{Y}_{u}^\mathrm{C}\mid\mathbf{X}_\mathrm{C}^{1},\mathbf{H}_\mathrm{C}^{1},\mathbf{H}_\mathrm{C}^{2}\right) \\
&=h\left(\mathbf{Y}_u^\mathrm{C}\mid\mathbf{H}_\mathrm{C}^1,\mathbf{H}_\mathrm{C}^2\right)-h\left(\mathbf{Y}_u^\mathrm{C}\mid\mathbf{X}_\mathrm{C}^1,\mathbf{H}_\mathrm{C}^1,\mathbf{H}_\mathrm{C}^2\right) \\ & \quad+h\left(\mathbf{Y}_u^\mathrm{C}\mid\mathbf{X}_\mathrm{C}^1,\mathbf{H}_\mathrm{C}^1,\mathbf{H}_\mathrm{C}^2\right)-h\left(\mathbf{Y}_u^\mathrm{C}\mid\mathbf{X}_\mathrm{C}^1,\mathbf{X}_\mathrm{C}^2,\mathbf{H}_\mathrm{C}^1,\mathbf{H}_\mathrm{C}^2\right)\\
&=h{\left(\mathbf{Y}_{u}^\mathrm{C}\mid\mathbf{H}_\mathrm{C}^{1},\mathbf{H}_\mathrm{C}^{2}\right)}-h{\left(\mathbf{W}_{u}^\mathrm{C}\right)}.
\end{aligned}}
\end{equation}
\begin{theorem}\label{theorem 2}
    According to (\ref{eq8}) and (\ref{eq45}), the communication MI of the proposed CA-enabled MIMO-OFDM ISAC system is expressed as
      \begin{equation}\label{eq46}
      {\fontsize{6}{8} \begin{aligned}
&I\left(\mathbf{X}_\mathrm{C}^1,\mathbf{X}_\mathrm{C}^2;\mathbf{Y}_u^\mathrm{C}\mid\mathbf{H}_\mathrm{C}^1,\mathbf{H}_\mathrm{C}^2\right)= \\& M\log_2\left[\left(\sigma_\mathrm{C}^2\right)^{-N_\text{U}N}\det\left(\sum_{b=1}^\mathcal{B}\left[\left(\mathbf{H}^b\right)^\text{H}E\left[\frac{\left(\mathbf{X}_\mathrm{C}^b\right)^\text{H}\mathbf{X}_\mathrm{C}^b}{M}\right]\mathbf{H}^b\right]+\sigma_\mathrm{C}^2\mathbf{I}_{N_\text{U}N}\right)\right].
     \end{aligned}} 
     \end{equation}
\end{theorem}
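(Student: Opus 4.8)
The plan is to evaluate the two differential entropies that remain in (\ref{eq45}) and take their difference, mirroring the argument behind \hyperref[lemma 1]{\textbf{Lemma 1}} but now accounting for the superposition of two independent CCs. First I would invoke that the capacity-achieving inputs $\mathbf{X}_\mathrm{C}^1$ and $\mathbf{X}_\mathrm{C}^2$ are zero-mean circularly-symmetric complex Gaussian; since the channel in (\ref{eq8}) is linear and the noise $\mathbf{W}_u^\mathrm{C}\sim\mathcal{CN}(0,\sigma_\mathrm{C}^2)$ is Gaussian, the conditional output $\mathbf{Y}_u^\mathrm{C}\mid\mathbf{H}_\mathrm{C}^1,\mathbf{H}_\mathrm{C}^2$ is Gaussian as well. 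I would then apply the standard identity that a circularly-symmetric complex Gaussian with covariance $\mathbf{K}$ has differential entropy $\log_2\det(\pi e\mathbf{K})$, so that the noise term contributes $h(\mathbf{W}_u^\mathrm{C})=\log_2\det(\pi e\sigma_\mathrm{C}^2\mathbf{I})$ per OFDM symbol.

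The key step is computing the per-symbol covariance $\mathbf{K}_Y$ of $\mathbf{Y}_u^\mathrm{C}$. Expanding $\sum_{b=1}^{\mathcal{B}}\mathbf{X}_\mathrm{C}^b\mathbf{H}_\mathrm{C}^b$ and forming the conditional second moment, the cross terms between the two CCs are proportional to $(\mathbf{H}_\mathrm{C}^1)^\text{H}E[(\mathbf{X}_\mathrm{C}^1)^\text{H}\mathbf{X}_\mathrm{C}^2]\mathbf{H}_\mathrm{C}^2$, which factor as $(\mathbf{H}_\mathrm{C}^1)^\text{H}E[(\mathbf{X}_\mathrm{C}^1)^\text{H}]E[\mathbf{X}_\mathrm{C}^2]\mathbf{H}_\mathrm{C}^2$ and vanish because $\mathbf{X}_\mathrm{C}^1$ and $\mathbf{X}_\mathrm{C}^2$ are independent and zero-mean (as noted above (\ref{eq45}), the two bands are far separated and carry independent data). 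Hence the covariance is additive across CCs,
\begin{equation}
\mathbf{K}_Y=\sum_{b=1}^{\mathcal{B}}\left(\mathbf{H}^b\right)^\text{H}E\left[\frac{\left(\mathbf{X}_\mathrm{C}^b\right)^\text{H}\mathbf{X}_\mathrm{C}^b}{M}\right]\mathbf{H}^b+\sigma_\mathrm{C}^2\mathbf{I}_{N_\text{U}N}.
\end{equation}

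Finally I would assemble the result. Taking the difference $h(\mathbf{Y}_u^\mathrm{C}\mid\mathbf{H}_\mathrm{C}^1,\mathbf{H}_\mathrm{C}^2)-h(\mathbf{W}_u^\mathrm{C})$ cancels the $\pi e$ factors and turns the ratio of determinants into $\log_2[(\sigma_\mathrm{C}^2)^{-N_\text{U}N}\det(\mathbf{K}_Y)]$, since the per-symbol output has dimension $N_\text{U}N$. Because the $M$ OFDM symbols act as independent parallel channel uses, the per-symbol MI is replicated $M$ times, producing the leading factor $M$ and yielding exactly (\ref{eq46}).

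I expect the main obstacle to be the covariance bookkeeping rather than the entropy formulas: one must treat $\mathbf{Y}_u^\mathrm{C}$ as a matrix whose columns are $M$ i.i.d.\ symbol snapshots, so that the entropy factorizes into $M$ identical per-symbol terms, and one must justify the cross-term cancellation carefully given that the two CCs carry distinct physical-layer dimensions $N_bN_\text{T}$. Aligning these dimensions so that the additive covariance above is well defined is the delicate point; once the superposition is recast in the single-link form of (\ref{eq42}) with effective signal covariance $\sum_b(\mathbf{H}^b)^\text{H}E[(\mathbf{X}_\mathrm{C}^b)^\text{H}\mathbf{X}_\mathrm{C}^b/M]\mathbf{H}^b$, the remainder follows directly from \hyperref[lemma 1]{\textbf{Lemma 1}}.
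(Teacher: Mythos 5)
Your proposal is correct and follows essentially the same route as the paper: treat the conditional output as circularly-symmetric Gaussian, compute $h(\mathbf{Y}_u^\mathrm{C}\mid\mathbf{H}_\mathrm{C}^1,\mathbf{H}_\mathrm{C}^2)$ via the log-determinant of the additive (cross-terms vanishing) covariance over the two CCs, subtract $h(\mathbf{W}_u^\mathrm{C})$, and use the per-subcarrier/per-symbol factorization to pull out the factor $M$ and the $N_\text{U}N$-dimensional determinant. The only point you flag as delicate --- aligning the dimensions of the two CCs --- is resolved in the paper simply by assuming $N_1=N_2=N$ and $M_1=M_2=M$ at the outset.
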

\begin{proof}
    Without loss of generality, we consider the assumptions of $N_1=N_2=N$ and $M_1=M_2=M$. Since $\mathbf{X}_\mathrm{C}^{b}$ is a block diagonal array, the probability density function (PDF) of $\mathbf{Y}_u^\mathrm{C}$ under known CSI is the product of the PDFs on the $N$ subcarriers, denoted by (\ref{eq47}), where $\mathbf{y}_u^{\mathrm{C},j}$ is the $j$-th row or column vector of $\mathbf{Y}_u^\mathrm{C}$, and $\mathbf{I}_{N_{\text{U}}}$ represents a $N_{\text{U}} \times N_{\text{U}}$ unit matrix.
    \begin{figure*}
        \centering
        \begin{equation}
            \label{eq47}
           {\fontsize{7.5}{8} \begin{aligned}
&p\left(\mathbf{Y}_u^\mathrm{C}\mid\mathbf{H}_\mathrm{C}^1,\mathbf{H}_\mathrm{C}^2\right)=\prod_{j=1}^{NM}p\left(\mathbf{y}_u^{\mathrm{C},j}\mid\mathbf{H}_\mathrm{C}^1,\mathbf{H}_\mathrm{C}^2\right) \\
&=\prod_{j=1}^{NM}\frac{1}{\pi^{N_\text{U}}\text{det}\left[\sum\limits_{b=1}^{\mathcal{B}}\left[\left(\mathbf{H}_\mathrm{C}^b\right)^\text{H}E\left\{\left(\mathbf{X}_\mathrm{C}^{b, j}\right)^\text{H}\mathbf{X}_\mathrm{C}^{b, j}\right\}\mathbf{H}_\mathrm{C}^b\right]+\sigma_\mathrm{C}^2\mathbf{I}_{N_\text{U}}\right]}\exp\left\{-\left[\left(\sum\limits_{b=1}^{\mathcal{B}}\left[\left(\mathbf{H}_\mathrm{C}^{b}\right)^{\text{H}}E\left\{\left(\mathbf{X}_\mathrm{C}^{b, j}\right)^{\text{H}}\mathbf{X}_\mathrm{C}^{b, j}\right\}\mathbf{H}_\mathrm{C}^{b}\right]+\sigma_\mathrm{C}^{2}\mathbf{I}_{N_{\text{U}}}\right)^{-1}\mathbf{y}_{u}^{\mathrm{C},j}\left(\mathbf{y}_{u}^{\mathrm{C},j}\right)^{\text{H}}\right]\right\} \\
&=\prod_{j=1}^{N}\frac1{\pi^{MN_{\text{U}}}\det^{M}\left[\sum\limits_{b=1}^{\mathcal{B}}\left[\left(\tilde{\mathbf{H}}_{u,m}^{n}\right)^{\text{H}}E\left\{\left(\mathbf{X}_{n}^{b,j}\right)^{\text{H}}\mathbf{X}_{n}^{b,j}\right\}\tilde{\mathbf{H}}_{a,m}^{n}\right]+\sigma_\mathrm{C}^{2}\mathbf{I}_{N_{\text{U}}}\right]} \\ & \quad \times\exp\left\{-\text{tr}\left[\sum\limits_{j=1}^{M}\left(\sum\limits_{b=1}^{\mathcal{B}}\left[\left(\tilde{\mathbf{H}}_{u,m}^{n}\right)^{\text{H}}E\left\{\left(\mathbf{X}_{n}^{b,j}\right)^{\text{H}}\mathbf{X}_{n}^{b,j}\right\}\tilde{\mathbf{H}}_{u,m}^{n}\right]+\sigma_\mathrm{C}^{2}\mathbf{I}_{N_{\text{U}}}\right)^{-1}\mathbf{y}_{u}^{\mathrm{C},j}\left(\mathbf{y}_{u}^{\mathrm{C},j}\right)^{\text{H}}\right]\right\},
\end{aligned}}
        \end{equation}
    \end{figure*}  
According to (\ref{eq47}), the entropy matrix $h\left(\mathbf{Y}_u^\mathrm{C}\mid\mathbf{H}_\mathrm{C}^1,\mathbf{H}_\mathrm{C}^2\right)$ is obtained as in (\ref{eq48}).
\begin{figure*}
 \begin{equation} \label{eq48}
 \begin{aligned}
& h{\left(\mathbf{Y}_u^\mathrm{C}\mid\mathbf{H}_\mathrm{C}^1,\mathbf{H}_\mathrm{C}^2\right)}\\ & =\sum_{n=0}^{N-1} \bigg\{MN_\text{U}{ \log _ 2 \pi + M }N_\text{U}{ \log _ 2 e}  +M\log_{2}\left[\det\left(\sum_{b=1}^{\mathcal{B}}\left[\left(\tilde{\mathbf{H}}_{u,m}^{n}\right)^{\text{H}}E\left\{\left(\mathbf{X}_{n}^{b,j}\right)^{\text{H}}\mathbf{X}_{n}^{b,j}\right\}\mathbf{\tilde{H}}_{u,m}^{n}\right]+\sigma_\mathrm{C}^{2}\mathbf{I}_{N_{\text{U}}}\right)\right]\bigg\}.
\end{aligned}
\end{equation}   
\end{figure*}
Meanwhile, the noise entropy $h\left(\mathbf{W}_u^\mathrm{C}\right)$ can be obtained as
\begin{equation} \label{eq49} 
\begin{aligned}
 h{\left(\mathbf{W}_u^\mathrm{C}\right)}&=NMN_\mathrm{U}\log_2\pi+NMN_\mathrm{U}\log_2e\\ & \quad+NM\mathrm{log}_2{\left[\det{\left(\sigma_\mathrm{C}^2\mathbf{I}_{N_\text{U}}\right)}\right]}.   
\end{aligned}
\end{equation}

Therefore, substituting (\ref{eq48}) and (\ref{eq49}) into (\ref{eq45}), the communication MI of the proposed CA-enabled MIMO-OFDM ISAC system in (\ref{eq46}) is derived in (\ref{eq50}),
\begin{figure*}
  \begin{equation} \label{eq50}
    {   \begin{aligned}
&I\left(\mathbf{X}_\mathrm{C}^1,\mathbf{X}_\mathrm{C}^2;\mathbf{Y}_u^\mathrm{C}\mid\mathbf{H}_\mathrm{C}^1,\mathbf{H}_\mathrm{C}^2\right) \\
&=M\sum_{n=0}^{N-1}\log_{2}\left[\left(\sigma_\mathrm{C}^{2}\right)^{-N_{\text{U}}}\det\left(\sum_{b=1}^{\mathcal{B}}\left[\left(\tilde{\mathbf{H}}_{u,m}^{n}\right)^{\text{H}}E\left\{\left(\mathbf{X}_{n}^{b,j}\right)^{\mathrm{H}}\mathbf{X}_{n}^{b,j}\right\}\mathbf{\tilde{H}}_{u,m}^{n}\right]+\sigma_\mathrm{C}^{2}\mathbf{I}_{N_{\text{U}}}\right)\right] \\
&=M\log_2\left[\left(\sigma_\mathrm{C}^2\right)^{-N_\text{U}N}\prod_{n=0}^{N-1}\det\left(\sum_{b=1}^{\mathcal{B}}\left[\left(\tilde{\mathbf{H}}_{u,m}^n\right)^\text{H}E\left\{\left(\mathbf{X}_n^{b,j}\right)^\text{H}\mathbf{X}_n^{b,j}\right\}\mathbf{\tilde{H}}_{u,m}^n\right]+\sigma_\mathrm{C}^2\mathbf{I}_{N_\text{U}}\right)\right] \\
&=M\log_2\left[\left(\sigma_\mathrm{C}^2\right)^{-N_\text{U}N}\det\left(\sum_{b=1}^{\mathcal{B}}\left[\left(\mathbf{H}^b\right)^\text{H}\text{diag}\left\{E\left\{\left(\mathbf{X}_\mathrm{C}^{b,j}\right)^\text{H}\mathbf{X}_\mathrm{C}^{b,j}\right\}\right\}\mathbf{H}^b\right]+\sigma_\mathrm{C}^2\mathbf{I}_{N_\text{U}N}\right)\right] \\
&=M\log_2\left[\left(\sigma_\mathrm{C}^2\right)^{-N_\text{U}N}\det\left[\sum_{b=1}^{\mathcal{B}}\left[\left(\mathbf{H}^b\right)^\text{H}E\left\{\frac{\left(\mathbf{X}_\mathrm{C}^b\right)^\text{H}\mathbf{X}_\mathrm{C}^b}M\right\}\mathbf{H}^b\right]+\sigma_\mathrm{C}^2\mathbf{I}_{N_\text{U}N}\right)\right],
\end{aligned}}
\end{equation}
     {\noindent} \rule[-10pt]{18cm}{0.1em}
\end{figure*}
where the $\mathbf{H}^b$ is the transformation of $\mathbf{H}_\mathrm{C}^b$ in (\ref{eq8}), expressed as follows
\begin{equation}\label{eq51}
    \mathbf{H}^b=\text{diag}\left\{\mathbf{\tilde{H}}_{u,m}^0,\cdots,\mathbf{\tilde{H}}_{u,m}^{N-1}\right\}\in\mathbb{C}^{N_\text{T}N\times N_\text{U}N}.
\end{equation}
\end{proof}

Upon observing (\ref{eq46}), we find that the maximum communication channel capacity of the proposed ISAC system is affected by space-time-frequency resources and increases with the increase of the number of CC. the MI simulation analysis is given in Section~\ref{se5-A} to verify the advantage of large channel capacity compared with MIMO-OFDM system.

\subsection{Sensing Performance Analysis} \label{sec4-B}
The CRLB is generally used to determine the lower bound on the variance of unbiased estimator and characterize the sensing ability of a signal \cite{hua2024}. Therefore, the CRLB of the proposed CA-enabled MIMO-OFDM ISAC system is derived in this section.
\subsubsection{CRLB of MIMO-OFDM system}
The CRLB for the OFDM ISAC system has been derived in \cite{wei2023carrier}, and the CRLB for the MIMO-OFDM ISAC system is presented in \hyperref[theorem 3]{\textbf{Theorem 3}}.
For ease of derivation, $\gamma_i=\frac{2v_{i,0}}{c}$ is used as an estimate of velocity to avoid the impact of varying carrier frequencies on theoretical derivation~\cite{wei2023carrier}. 
Meanwhile, $\sin{\theta_{i,\text{Rx}}}$ is used as an estimate of AoA to avoid the presence of unknown parameters after multiple derivations~\cite{Liu2022}.
\begin{theorem}\label{theorem 3}
    For a MIMO-OFDM ISAC system (e.g., the $b$-th CC), the CRLBs of $\sin{\theta_{i,\text{Rx}}}$, $\tau_{i,0}$, and $\gamma_i$ are expressed as (\ref{eq52}), (\ref{eq53}), and (\ref{eq54}), respectively.
    {\fontsize{9}{8} \begin{equation}\label{eq52}
       CRLB(\sin{\theta_{i,\text{Rx}}})= \frac{\Gamma_\text{mimo}N_\text{R}\left[M_bN_b\Theta_b^n\Theta_b^m-\left(\Theta_b^{m,n}\right)^2\right]}{M_bN_b},
    \end{equation}}
    \begin{equation}\label{eq53}   CRLB(\tau_{i,0})=\frac{\Gamma_\text{mimo}N_b\left[M_bN_\text{R}\Theta_b^p\Theta_b^m-\left(\Theta_b^{p,m}\right)^2\right]}{M_bN_\text{R}},
    \end{equation}
    \begin{equation} \label{eq54}  CRLB(\gamma_i)=\frac{\Gamma_\text{mimo}M_b\left[N_bN_\text{R}\Theta_b^p\Theta_b^n-\left(\Theta_b^{p,n}\right)^2\right]}{N_bN_\text{R}},
    \end{equation}
where $\Gamma_\text{mimo}$ is denoted by (\ref{eq55}), 
    \begin{figure*}
     \begin{equation}\label{eq55}
        \Gamma_\text{mimo}= 
         \frac{1}{{\varpi \cdot\left(2\pi\right)^2}}\frac{1}{M_bN_bN_{\text{R}}\Theta_b^p\Theta_b^n\Theta_b^m+2\Theta_b^{p,n}\Theta_b^{p,m}\Theta_b^{m,n}-N_b\left(\Theta_b^{p,m}\right)^2\Theta_b^n-M_b\left(\Theta_b^{p,n}\right)^2\Theta_b^m-N_\text{R}\left(\Theta_b^{m,n}\right)^2\Theta_b^p},
    \end{equation}
  {\noindent} \rule[-10pt]{18cm}{0.1em}
    \end{figure*}
    $\varpi =\frac{\left|A\right|^2}{\sigma^2}$, $\Theta_b^n=\sum\limits_n\left(n\Delta f^b\right)^2$, $\Theta_b^m=\sum\limits_m\left(mf_\mathrm{C}^bT^b\right)^2$, $\Theta_b^p=\sum\limits_p\left(p\frac{d_\text{r}}{\lambda^b}\right)^2$, 
    $\Theta_b^{m,n}=\sum\limits_m
    \sum\limits_n mn\Delta f^bf_\mathrm{C}^bT^b$, $\Theta_b^{p,n}=\sum\limits_p
    \sum\limits_n pn\frac{d_\text{r}}{\lambda^b}\Delta f^b$, and $\Theta_b^{p,m}=\sum\limits_p
    \sum\limits_m pm\frac{d_\text{r}}{\lambda^b}f_\mathrm{C}^bT^b$.
\end{theorem}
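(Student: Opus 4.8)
The plan is to model the useful (noise-free) per-target echo sample as a separable three-dimensional complex sinusoid and to compute the Fisher information matrix (FIM) for the parameter vector $\eta=(\sin\theta_{i,\text{Rx}},\tau_{i,0},\gamma_i)$, treating the complex amplitude $A=\kappa_{i,S}^b$ as a known deterministic constant so that $\varpi=|A|^2/\sigma^2$ acts as an effective SNR. First I would write the sample at receive antenna $p$, subcarrier $n$, and symbol $m$ in the form $s_{p,n,m}=A\,\exp\{j2\pi[p(d_\text{r}/\lambda^b)\sin\theta_{i,\text{Rx}}-n\Delta f^b\tau_{i,0}+mf_\mathrm{C}^bT^b\gamma_i]\}$, which follows from the sensing channel model (\ref{eq11}) and the receive steering (\ref{eq9}) once the transmit beam $\mathbf{W}_\text{Tx}^b$ is held fixed and $f_{i,\text{s}}^b=f_\mathrm{C}^b\gamma_i$ is substituted. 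The key structural fact is that $s_{p,n,m}$ factorises into one exponential per index, so every triple sum appearing below separates into a product of single-index sums.

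Next I would form the FIM with the standard complex-Gaussian formula $[\mathbf{J}]_{kl}=(2/\sigma^2)\,\mathrm{Re}\sum_{p,n,m}(\partial s_{p,n,m}/\partial\eta_k)^{*}(\partial s_{p,n,m}/\partial\eta_l)$. Differentiating $s_{p,n,m}$ pulls down the linear factors $j2\pi p(d_\text{r}/\lambda^b)$, $-j2\pi n\Delta f^b$, and $j2\pi mf_\mathrm{C}^bT^b$ for the three parameters respectively; the residual phase cancels against its conjugate, so each entry collapses to a scalar proportional to $(2\pi)^2\varpi$ times either a squared moment ($\Theta_b^p$, $\Theta_b^n$, $\Theta_b^m$) on the diagonal or a mixed moment ($\Theta_b^{p,n}$, $\Theta_b^{p,m}$, $\Theta_b^{m,n}$) off the diagonal, each weighted by the size of the index over which the signal is constant. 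This yields $\mathbf{J}=(2\pi)^2\varpi\,\mathbf{M}$ (an overall factor of two being absorbed into the convention that defines $\Gamma_\text{mimo}$), where $\mathbf{M}$ has diagonal entries $M_bN_b\Theta_b^p$, $M_bN_\text{R}\Theta_b^n$, $N_bN_\text{R}\Theta_b^m$ and off-diagonal entries $-M_b\Theta_b^{p,n}$, $N_b\Theta_b^{p,m}$, $-N_\text{R}\Theta_b^{m,n}$, with signs inherited from the $\pm j$ factors above.

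Finally, the CRLB of each parameter is the matching diagonal entry of $\mathbf{J}^{-1}$, computed as a $2\times2$ cofactor of $\mathbf{M}$ over $\det\mathbf{M}$. The three cofactors deliver exactly the bracketed numerators of (\ref{eq52})--(\ref{eq54}); for instance the $(1,1)$ cofactor is $N_\text{R}^2[M_bN_b\Theta_b^n\Theta_b^m-(\Theta_b^{m,n})^2]$, and dividing by one power of $N_\text{R}$ from the determinant leaves the factor $N_\text{R}[M_bN_b\Theta_b^n\Theta_b^m-(\Theta_b^{m,n})^2]/(M_bN_b)$ of (\ref{eq52}). A direct expansion of $\det\mathbf{M}$ then has to be shown to factor as $N_\text{R}M_bN_b$ times the long denominator of $\Gamma_\text{mimo}$ in (\ref{eq55}), so that the common scalar $1/[(2\pi)^2\varpi]$ divided by this determinant reproduces $\Gamma_\text{mimo}$ and the three bounds follow. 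I expect the determinant expansion to be the main obstacle: it is the single place where all six $\Theta$ quantities interact, and keeping the signs together with the $N_\text{R},M_b,N_b$ multiplicities straight so that $\det\mathbf{M}$ collapses to the stated product form is the delicate bookkeeping step. A secondary point I would verify before fixing the FIM at $3\times3$ is the legitimacy of treating $A$ as known---were its phase unknown it would couple to $\tau_{i,0}$ and $\gamma_i$ and enlarge the bound---which justifies the deterministic-amplitude assumption underlying $\varpi$.
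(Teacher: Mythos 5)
Your proposal is correct and follows essentially the same route as the paper: write the per-target echo at antenna $p$, subcarrier $n$, symbol $m$ as a separable complex exponential with known amplitude, form the $3\times3$ FIM for $(\sin\theta_{i,\text{Rx}},\tau_{i,0},\gamma_i)$ whose entries reduce to $(2\pi)^2\varpi$ times the $\Theta$ moments with the multiplicities and signs you state, and read off each CRLB as a cofactor over $\det\mathbf{M}=N_\text{R}M_bN_b$ times the denominator of $\Gamma_\text{mimo}$. The paper's Appendix A is in fact terser than your write-up (it only records the likelihood and the FIM--CRLB relation, with the explicit FIM entries appearing as (\ref{eq59})--(\ref{eq64}) in Section \ref{sec4-B}), so no substantive difference remains.
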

\begin{proof}
    Please refer to \hyperref[apdA]{\textbf{Appendix A}}.
\end{proof}

\subsubsection{CRLB of CA-enabled MIMO-OFDM system}
The CRLBs of parameter estimations of the proposed ISAC system are derived.

Before deriving, we need to know that for independent observed data, the total likelihood function is the product of the likelihood functions of each independent observation, and the total log-likelihood function is the sum of the log-likelihood functions of each independent observation. Therefore, for two independent observations, the total Fisher Information Matrix (FIM) is the sum of the FIMs of the two independent observations~\cite{kay1993fundamentals}.

Upon observing (\ref{eq12}), the received echo signals on high and low-frequency bands are independent. Therefore, by linearly summing the respective FIM of the received echo signals on high and low-frequency bands, the CRLB for the CA-enabled MIMO-OFDM system can be derived.

The FIM of the $b$-th CC is expressed as
\begin{equation}\label{eq58}
\mathbf{F}_b=\begin{bmatrix}F_{\sin\theta_{i,\text{Rx}}}^b&F_{\sin\theta_{i,\text{Rx}},\tau_{i,0}}^b&F_{\sin\theta_{i,\text{Rx}},\gamma_i}^b\\F_{\tau_{i,0},\sin\theta_{i,\text{Rx}}}^b&F_{\tau_{i,0}}^b&F_{\tau_{i,0},\gamma_i}^b\\F_{\gamma_i,\sin\theta_\text{Rx}}^b&F_{\gamma_i,\tau_{i,0}}^b&F_{\gamma_i}^b\end{bmatrix},
\end{equation}
where $F_{\sin\theta_{i,\text{Rx}}}^b$, $F_{\tau_{i,0}}^b$, and $F_{\gamma_i}^b$ are expressed in (\ref{eq59}), (\ref{eq60}), and (\ref{eq61}), respectively; $F_{\sin\theta_{i,\text{Rx}},\tau_{i,0}}^b=F_{\tau_{i,0},\sin\theta_{i,\text{Rx}}}^b$ is expressed in (\ref{eq62}); $F_{\sin\theta_{i,\text{Rx}},\gamma_i}^b=F_{\gamma_i,\sin\theta_{i,\text{Rx}}}^b$ is expressed in (\ref{eq63}); $F_{\tau_{i,0},\gamma_i}^b=F_{\gamma_i,\tau_{i,0}}^b$ is expressed in (\ref{eq64}).
\begin{figure*}
\begin{equation}\label{eq59}
F_{\sin\theta_{i,\text{Rx}}}^b=\frac{\left|A\right|^2}{\sigma^2}\sum_p^{N_\text{R}}\sum_m^{M_b}\sum_n^{N_b}\left(2\pi p\frac{d_\text{r}}{\lambda^b}\right)^2=\frac{4\pi^2\left|A\right|^2}{\sigma^2}\left(\frac{d_\text{r}}{\lambda^b}\right)^2\frac{\left(2N_\text{R}-1\right)(N_\text{R}-1)N_\text{R}N_bM_b}{6},
\end{equation}
\begin{equation}\label{eq60}
 F_{\tau_{i,0}}^b=\frac{\left|A\right|^2}{\sigma^2}\sum_p^{N_\text{R}}\sum_m^{M_b}\sum_n^{N_b}\left(2\pi n\Delta f^b\right)^2=\frac{4\pi^2\left|A\right|^2\left(\Delta f^b\right)^2}{\sigma^2}\frac{\left(2N_b-1\right)\left(N_b-1\right)N_bN_\text{R}M_b}6,   
\end{equation}
\begin{equation}\label{eq61}
   F_{{\gamma_{i}}}^{b}=\frac{\left|A\right|^{2}}{\sigma^{2}}\sum_{p}^{{N_{{\text{R}}}}}\sum_{m}^{{M_{b}}}\sum_{n}^{{N_{b}}}\left(2\pi mf_{{\text{C}}}^{b}T^{b}\right)^{2}=\frac{4\pi^{2}\left|A\right|^{2}\left(f_{{\text{C}}}^{b}T^{b}\right)^{2}}{\sigma^{2}}\frac{\left(2M_{b}-1\right)\left(M_{b}-1\right)M_{b}N_{{\text{R}}}N_{b}}{6}, 
\end{equation}
\begin{equation}\label{eq62}
   F_{\sin\theta_{i,\text{Rx}}, \tau_{i,0}}^{b}=F_{\tau_{i,0}, \sin \theta_{i,\text{Rx}}}^{b}=\frac{-|A|^{2}(2 \pi)^{2}}{\sigma^{2}} \sum_{p}^{N_{\text{R}}} \sum_{m}^{M_{b}} \sum_{n}^{N_{b}}\left(p n \frac{d_{\text{r}}}{\lambda^{b}} \Delta f^{b}\right)=\frac{-4 \pi^{2}|A|^{2}\left(\frac{d_{\text{r}}}{\lambda^{b}} \Delta f^{b}\right)}{\sigma^{2}} \frac{\left(N_{\text{R}}-1\right) N_{\text{R}}\left(N_{b}-1\right) N_{b} M_{b}}{4}, 
\end{equation}
\begin{equation}\label{eq63}
F_{\sin \theta_{i,\text{Rx}}, \gamma_i}^{b}=F_{\gamma_i, \sin \theta_{\text{Rx}}}^{b}
=\frac{|A|^{2}(2 \pi)^{2}}{\sigma^{2}} \sum_{p}^{N_{\text{R}}} \sum_{m}^{M_{b}} 
\sum_{n}^{N_{b}}\left(p m \frac{d_{\text{r}}}{\lambda^{b}} f_{\text{C}}^{b} T^{b}\right)
=\frac{4 \pi^{2}|A|^{2}\left(\frac{d_{\text{r}}}{\lambda^{b}} f_{\text{C}}^{b} T^{b}\right)}
{\sigma^{2}} \frac{\left(N_{\text{R}}-1\right) N_{\text{R}}\left(M_{b}-1\right) M_{b} N_{b}}{4},
\end{equation}
\begin{equation}\label{eq64}
F_{\gamma_i, \tau_{i,0}}^{b}=F_{\tau_{i,0}, \gamma_i}^{b}=
\frac{-(2 \pi)^{2}|A|^{2}}{\sigma^{2}} \sum_{p}^{N_{\text{R}}} \sum_{m}^{M_{b}} 
\sum_{n}^{N_{b}}\left(m n \Delta f^{b} f_{\text{C}}^{b} T^{b}\right)
=\frac{-4 \pi^{2}|A|^{2}\left(\Delta f^{b} f_{\text{C}}^{b} T^{b}\right)}
{\sigma^{2}} \frac{\left(M_{b}-1\right) M_{b}\left(N_{b}-1\right) N_{b} N_{\text{R}}}{4}, 
\end{equation}
{\noindent} \rule[-10pt]{18cm}{0.1em}
\end{figure*}

Combining the relationships $\theta_{i,\text{R}}=\text{arcsin}(\sin\theta_{i,\text{R}})$, $\gamma_i=\frac{2v_{i,0}}{c}$, and $\tau_{i,0}=\frac{2r_{i,0}}{c}$, the CRLBs for angle, range, and radial velocity of the CA-enabled MIMO-OFDM system are expressed in (\ref{eq65}), (\ref{eq66}), and (\ref{eq67}), respectively.
\begin{equation}\label{eq65}
CRLB_\text{CA}\left(\theta_{i,\text{Rx}}\right)
=\left[\cos\theta_{i,\text{Rx}}\sum_{b=1}^{\mathcal{B} }\mathbf{F}_{b}\right]_{1,1}^{-1}, 
\end{equation}
\begin{equation}\label{eq66}
CRLB_\text{CA}\left(r_{i,0}\right)=\left[\left(\frac{2}{c}\right)^{2}\sum_{b=1}^{\mathcal{B} }\mathbf{F}_{b}\right]_{2,2}^{-1},
\end{equation}
\begin{equation}\label{eq67}
CRLB_\text{CA}\left(v_{i,0}\right)=\left[\left(\frac2c\right)^2\sum_{b=1}^{\mathcal{B}}\mathbf{F}_1\right]_{3,3}^{-1}.   
\end{equation}

Upon observing (\ref{eq65}), (\ref{eq66}), and (\ref{eq67}), the CRLB of angle is related to angle, which is consistent with previous work in~\cite{Liu2022}. Furthermore, the CRLBs of parameter estimations in CA-enabled MIMO-OFDM system decrease with the increase of the number of CCs, revealing the superiority of the proposed ISAC system. It is important to note that there has still a trade-off between the CRLBs of range and velocity, which can be addressed by combining multi-resource optimization or adaptive techniques~\cite{Liyanaarachchi,liu2018}.

\section{Simulation Results and Analysis} \label{se5}
In this section, the simulation results are provided to verify the high-spend communication and high-accuracy sensing capabilities of the proposed CA-enabled MIMO-OFDM ISAC system. Simulation parameters are detailed in Table \ref{tb1}. The simulation results are obtained with 1000 times Monte Carlo simulations. It is worth noting that reasonable resource allocation can maximize the benefits of communication and sensing, which is also a focus of future trend. To ensure that the simulation results are not constrained by resource allocation, the simulation considers different bandwidths and different numbers of OFDM symbols on high and low-frequency bands. 

\begin{table*}
	\caption{Simulation parameters \cite{wei2023carrier,braun2009parametrization,3gpp2018nr,3gpp.38.104,Mateo}.}
	\label{tb1}
	\renewcommand{\arraystretch}{1.3} 
	\begin{center}\resizebox{\linewidth}{!}{
		\begin{tabular}{|m{0.055\textwidth}<{\centering}| m{0.43\textwidth}<{\centering}| m{0.15\textwidth}<{\centering}| m{0.055\textwidth}<{\centering}|m{0.43\textwidth}<{\centering}| m{0.15\textwidth}<{\centering}|}
			\hline
			\textbf{Symbol} & \textbf{Parameter} & \textbf{Value} & \textbf{Symbol} & \textbf{Parameter} & \textbf{Value} \\
			\hline
			$ f_\mathrm{C}^1 $	& Carrier frequency of low-frequency band &  3.5\;GHz~\cite{3gpp.38.104,Mateo}  & $ f_\mathrm{C}^2 $	& Carrier frequency of high-frequency band &  28\;GHz~\cite{3gpp.38.104,Mateo}  \\
			\hline
			$ M_1$ & Number of OFDM symbols on low-frequency band &  14~\cite{wei2023carrier} & $  M_2 $	& Number of OFDM symbols on high-frequency band & 28 \\
			\hline
			$ N_1$ & Number of subcarriers on low-frequency band &  512~\cite{wei2023carrier}  & $ N_2 $	& Number of subcarriers on high-frequency band &  512 \\
			\hline
              $ T^1 $	& Total symbol duration on low-frequency band  &  43.9\;$\mathrm{\mu}$s   & $ T^2 $	& Total symbol duration on high-frequency band  &  5.49\;$\mathrm{\mu}$s  \\
               \hline
			$ \Delta f^1 $	& Subcarrier spacing on low-frequency band &  30\;kHz~\cite{3gpp2018nr}  & $ \Delta f^2 $	& Subcarrier spacing on high-frequency band & 240\;kHz~\cite{3gpp2018nr} \\
			\hline
			$ N_\text{T} $ & Number of transmit antennas of BS &  128  &   $ N_\text{R} $	& Number of receive antennas of BS &  128  \\			
			\hline
		   $ N_\text{U} $	& Number of receive antennas of UE  & \{4, 5, 6\}  &  $I$  & Number of potential targets & 3  \\ \hline
              $\theta_{i,\text{Rx}}$ & AoAs of targets  & $\{30,40,50\}^\circ$ &  $\theta_{i,\text{Tx}}$ & AoDs of targets  & $\{30,40,50\}^\circ$ \\ 
                \hline
              $ r_{i,0} $	& Ranges of targets  &  \{117,150,170\} m  & $ v_{i,0} $	& Velocities of targets  &  \{13,20,25\} m/s  \\
			\hline
		\end{tabular}}
	\end{center}
\end{table*}

\subsection{Performance of Communication}\label{se5-A}
According to (\ref{eq43}) and (\ref{eq46}) derived in Section \ref{se4-A}, the communication MIs of CA-enabled MIMO-OFDM ISAC system and MIMO-OFDM ISAC system under various SNR and $N_\text{U}$ are simulated in Fig.~\ref{fig5}.
\begin{figure}
    \centering
    \includegraphics[width=0.45\textwidth]{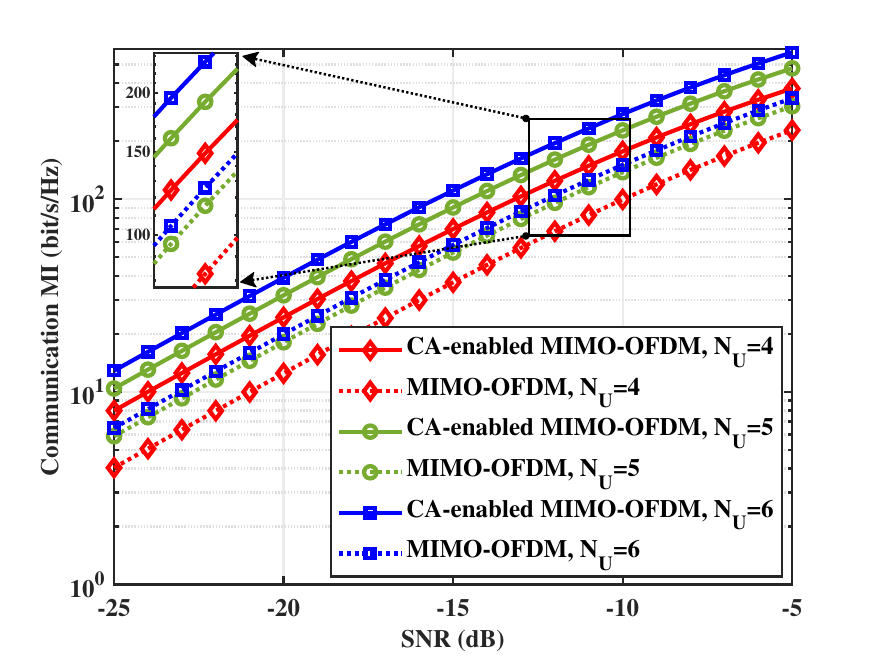}
    \caption{Communication MI of the proposed CA-enabled MIMO-OFDM ISAC system and MIMO-OFDM ISAC system.}
    \label{fig5}
\end{figure}
 
In Fig.~\ref{fig5}, the communication MI of the CA-enabled MIMO-OFDM ISAC signal is larger than that of the MIMO-OFDM ISAC signal under different $N_\text{U}$. Simultaneously, the communication MI rises with the number of receive antennas $N_\text{U}$ in UE, driven by spatial multiplexing in MIMO systems.

\subsection{Performance of Sensing} \label{se5-b}
In this section, the CRLBs and the average of root mean square errors (ARMSEs) of range and velocity estimations are simulated to demonstrate the high-accuracy sensing performance of the proposed CA-enabled MIMO-OFDM ISAC system. 
The ARMSEs are denoted by 
\begin{equation}
\begin{aligned}
    &\Theta_r=\frac{1}{I}\sum_{i=1}^{I}\sqrt{(\hat{r}_{i,0})^2-(r_{i,0})^2}, \\ & \Theta_v=\frac{1}{I}\sum_{i=1}^{I}\sqrt{(\hat{v}_{i,0})^2-(v_{i,0})^2},
\end{aligned} 
\end{equation}
where $\Theta_r$ and $\Theta_v$ denote the ARMSEs of range and velocity estimations, respectively.

\subsubsection{Comparison of CRLB}
According to the closed-form sensing CRLBs derived in Section~\ref{sec4-B}, the CRLBs for range and velocity estimations of CA-enabled MIMO-OFDM ISAC signal, MIMO-OFDM ISAC signal, and OFDM ISAC signal are simulated, as shown in Figs.~\ref{fig6} and \ref{fig7}.
Fig.~\ref{fig6} reveals the following phenomena.
\begin{itemize}
    \item The CRLB of the CA-enabled MIMO-OFDM ISAC signal is the lowest, indicating that the proposed CA-enabled MIMO-OFDM ISAC signal has better range estimation performance than MIMO-OFDM ISAC signal and OFDM ISAC signal.
    \item Comparing with the CRLB of low-frequency MIMO-OFDM ISAC signal, the CRLB of high-frequency MIMO-OFDM ISAC signal is lower, which is caused by the large bandwidth resources of high-frequency signal.
    \item Comparing with the CRLB of OFDM ISAC signal, the CRLB of MIMO-OFDM ISAC signal is lower, which benefits from the spatial diversity of MIMO.
\end{itemize}
Fig.~\ref{fig7} demonstrates the following phenomena.
\begin{itemize}
    \item The CRLB of CA-enabled MIMO-OFDM ISAC signal is lowest, indicating that the CA-enabled MIMO-OFDM ISAC signal has better performance of velocity estimation than MIMO-OFDM ISAC signal and OFDM ISAC signal.
    \item Comparing with the CRLB of low-frequency MIMO-OFDM ISAC signal, the high-frequency MIMO-OFDM ISAC signal is lower, which also exist in the OFDM ISAC signals. The reason is as follows. In the same received slots, the number of OFDM symbols in high-frequency band is more than that in low-frequency band, which brings more velocity information of targets.
    \item Comparing with the CRLB of OFDM ISAC signal, the CRLB of MIMO-OFDM ISAC signal is lower, which also benefits from the spatial diversity of MIMO.
\end{itemize}
Therefore, the simulation results indicate the superior sensing performance of the proposed CA-enabled MIMO-OFDM ISAC signal compared with MIMO-OFDM ISAC signal and OFDM ISAC signal.
\begin{figure}
    \centering
    \includegraphics[width=0.45\textwidth]{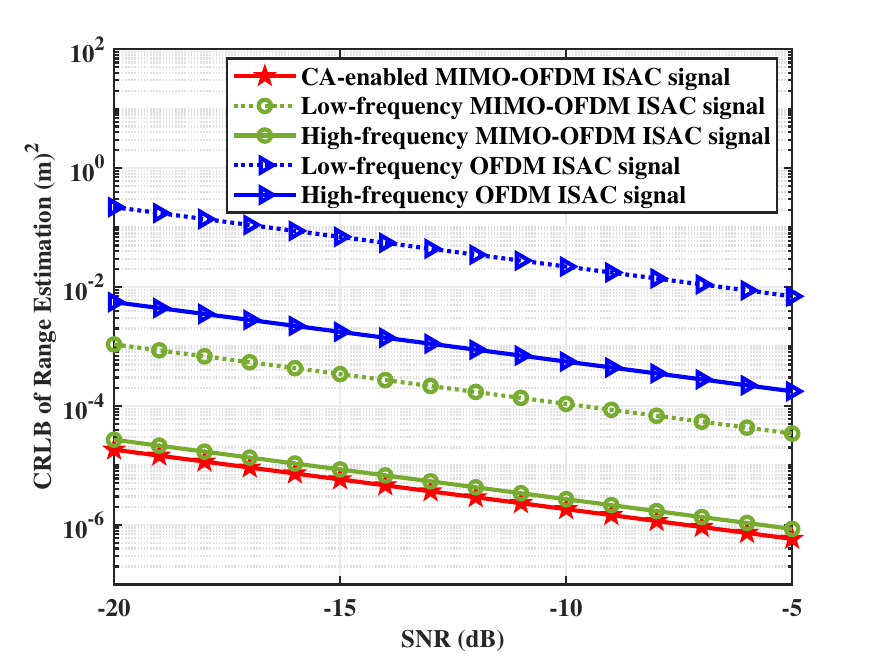}
    \caption{Comparison of CRLB for range estimation}
    \label{fig6}
\end{figure}

\begin{figure}
    \centering
    \includegraphics[width=0.45\textwidth]{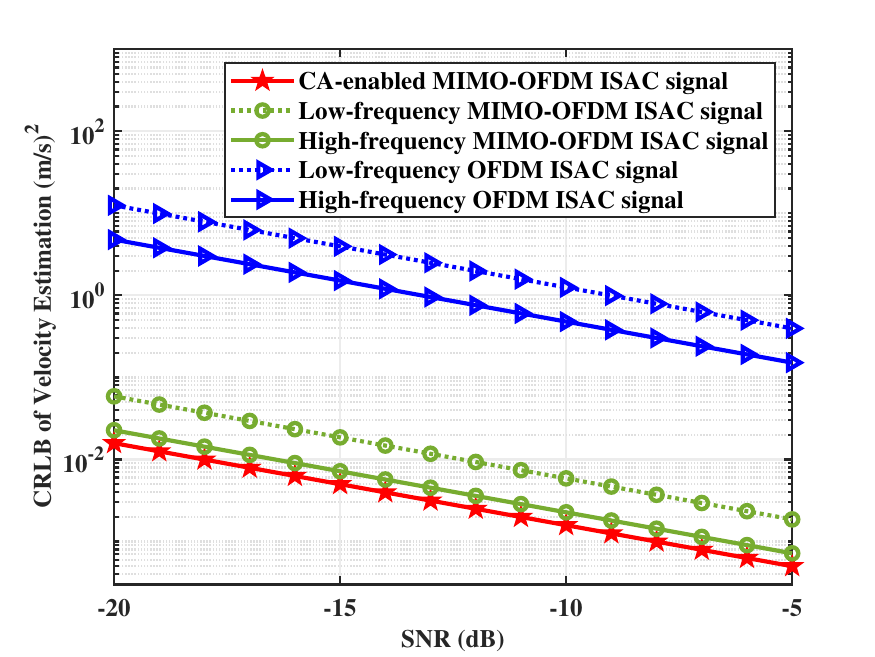}
    \caption{Comparison of CRLB for velocity estimation}
	\label{fig7}
\end{figure}

\subsubsection{Analysis of CRLB}
We further utilize the closed-form CRLB to investigate the relationship between high and low-frequency bandwidth allocation and system sensing performance. Assuming a total available bandwidth of 76.8 MHz, we derive that the subcarrier spacings for high and low-frequency bands satisfy $N_1 + 4\times N_2 = 2560$. The CRLBs for range and velocity estimations are simulated with 
$N_2$ as the variable, as depicted in Fig.~\ref{fig8new}.

As illustrated in Fig.~\ref{fig8new}, the bandwidth allocation is categorized into three cases, where the following conclusions are revealed. 
\begin{itemize}
    \item In Case 1, as the high-frequency bandwidth increases, the CRLB for velocity estimation decreases, whereas the CRLB for range estimation increases. Thus, a trade-off between range and velocity estimation performance emerges.
    \item In Case 2, as the high-frequency bandwidth increases, the CRLB for velocity estimation increases, whereas the CRLB for range estimation first increases and then decreases. Hence, reducing the high-frequency bandwidth optimizes overall sensing performance in Case 2.
    \item In Case 3, as the high-frequency bandwidth increases, the CRLBs for both velocity and range estimations decrease. Therefore, the high-frequency bandwidth should be increased to its maximum available limit.
\end{itemize}
The above conclusions provide guidance for the bandwidth allocation of the proposed CA-enabled MIMO-OFDM ISAC system.

\begin{figure}
    \centering
    \includegraphics[width=0.45\textwidth]{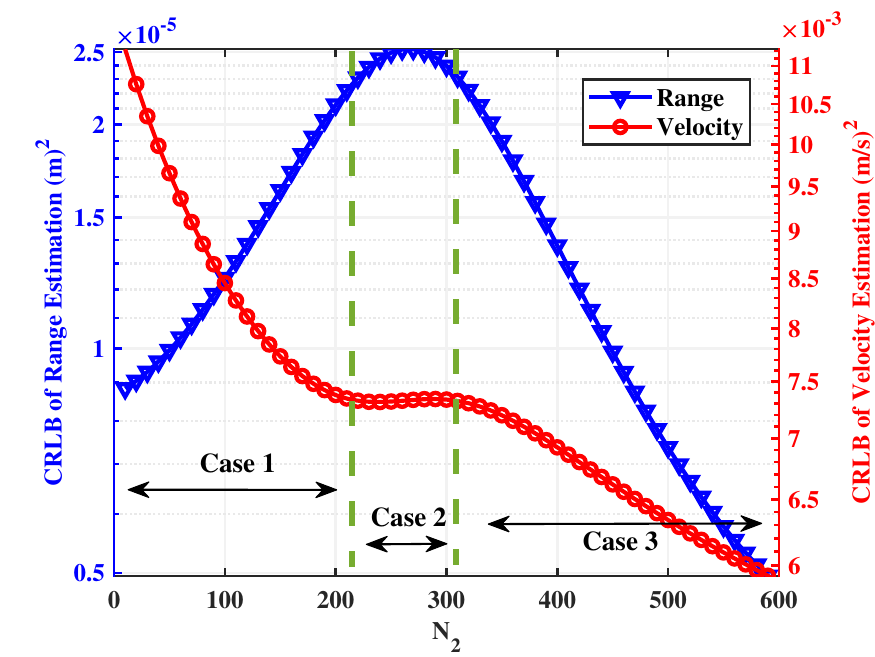}
\caption{CRLBs for different high and low-frequency bandwidth allocations, and the SNR is set as -20 dB}
    \label{fig8new}
\end{figure}

\subsubsection{ARMSE}
The ARMSEs for range and velocity estimations of the proposed CA-enabled MIMO-OFDM ISAC signal and MIMO-OFDM ISAC signal are simulated in Fig.~\ref{fig8}. There are several settings.
\begin{enumerate}
    \item The SNR in the simulation is the transmit SNR of single RE.
    \item The theoretical ARMSEs of range and velocity estimations are expressed in (\ref{ARMSE1}) and (\ref{ARMSE2}), respectively~\cite{skolnik1960theoretical}.
\begin{equation}\label{ARMSE1}
    \sigma_R=\frac{c}{2B\sqrt{2\chi }},
\end{equation}
\begin{equation} \label{ARMSE2}
    \sigma_V=\frac{c}{2M_\text{ofdm}f_\mathrm{C}T_\mathrm{C}\sqrt{2\chi }},
\end{equation}
where $B$ and $\chi $ are the bandwidth and received SNR of signal, respectively; $M_\text{ofdm}$ is the number of OFDM symbols and $f_\mathrm{C}$ is the carrier frequency; $T_\mathrm{C}$ is the total symbol duration.
\end{enumerate}

Fig.~\ref{fig8}(a) reveals the following phenomena.
\begin{itemize}
    \item With the increase of SNR, the ARMSEs of CA-enabled MIMO-OFDM ISAC signal and MIMO-OFDM ISAC signal are stable, because the accuracy of sensing method is limited by the grid size.
    \item Comparing with the ARMSEs of MIMO-OFDM ISAC signals, the ARMSE of CA-enabled MIMO-OFDM ISAC signal is lower, indicating the superior performance of range estimation of the proposed CA-enabled MIMO-OFDM ISAC signal over the MIMO-OFDM ISAC signal. 
    \item When SNR is smaller than $-11.5$ dB, comparing with the ARMSE of low-frequency MIMO-OFDM ISAC signal, the ARMSE of high-frequency MIMO-OFDM ISAC signal is higher. In the simulation, we employed a typical path loss model and set the received SNR of high-frequency signal 5 dB lower than that of low-frequency signal~\cite{goldsmith2005wireless}. Therefore, for comparison under the same received SNR, the curve of high-frequency MIMO-OFDM ISAC signal needs to shift 5 dB to the left.
    \item With the increase of SNR, the ``cliff effect'' is occur in ARMSE value, which is common in signal processing. At low SNR, the sensing information is overwhelmed by noise. When the SNR exceeds a certain threshold, the sensing information stands out. This threshold is also commonly used to characterize the robustness of an algorithm.
\end{itemize}

As shown in Fig.~\ref{fig8}(b), the conclusions are similar to
range estimation. It should be noted that the higher number of OFDM symbols in the high-frequency band results in a lower ARMSE. Additionally, the absence of the ``cliff effect'' is because the threshold does not fall within the SNR range of -20 to -5 dB. 

\begin{figure}
	\centering
	\subfigure[ARMSE for range estimation] {\label{fig8.a}\includegraphics[width=0.45\textwidth]{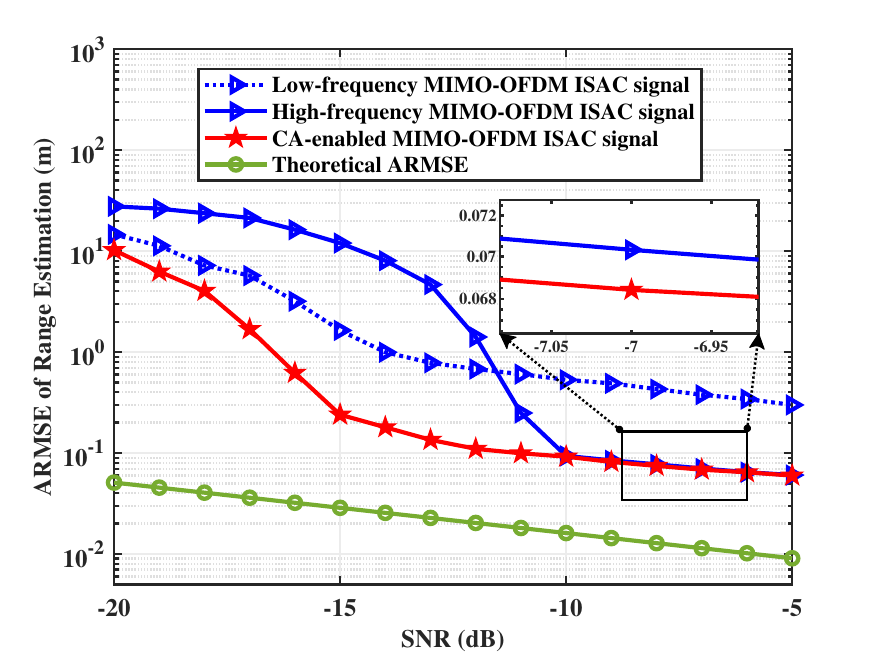}}
	\subfigure[ARMSE for velocity estimation] {\label{fig8.b}\includegraphics[width=0.45\textwidth]{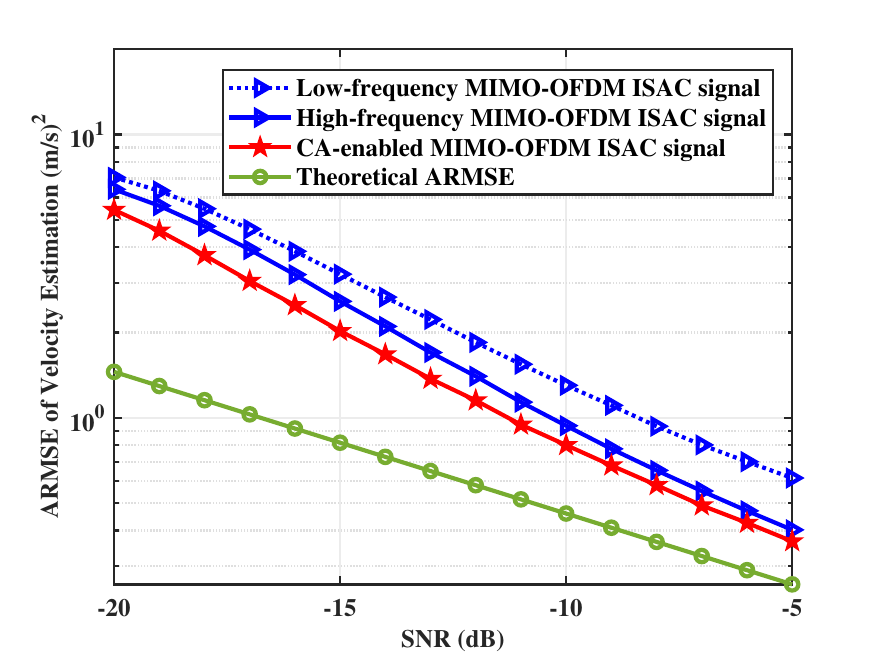}}
	\caption{Comparison of MIMO-OFDM ISAC signal on different frequency bands and CA-enabled MIMO-OFDM ISAC signal.}
	\label{fig8}
\end{figure}

Then, the data-level and symbol-level fusion methods are compared in Fig.~\ref{fig9}, where the following phenomena are revealed.
\begin{itemize}
    \item In Fig.~\ref{fig9}, the ARMSE of symbol-level fusion method is consistently lower than that of the data-level fusion method, indicating the superiority of the proposed symbol-level fusion method over the data-level fusion method.
    \item With the increase of SNR, the ARMSEs of data-level and symbol-level fusions tend to be flat. This is because the accuracy of sensing is not only affected by bandwidth, SNR and other resources, but also restricted by the grid sizes of range and velocity searching intervals. Therefore, when the influence of SNR is dominant, the curve will continue to decline with the increase of SNR, while when the influence of grid is dominant, the curve will be flat.
    \item Compared with the data-level fusion method, the accuracy of range and velocity estimations under symbol-level fusion method can improve by a maximum of 92.77\% and 18.31\%, respectively.
\end{itemize}

\begin{figure}
	\centering
	\subfigure[ARMSE for range estimation] {\label{fig9.a}\includegraphics[width=0.45\textwidth]{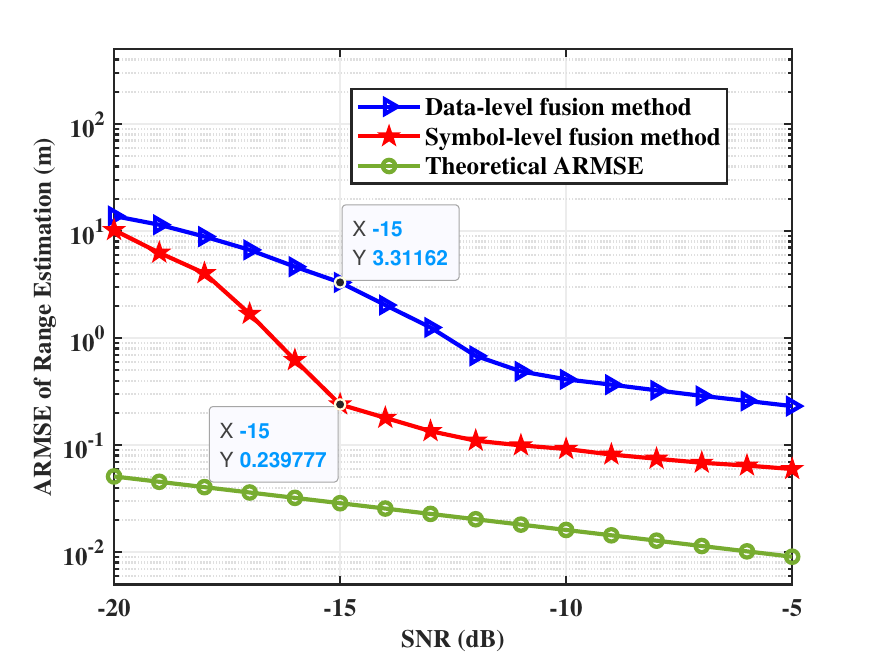}}
	\subfigure[ARMSE for velocity estimation] {\label{fig9.b}\includegraphics[width=0.45\textwidth]{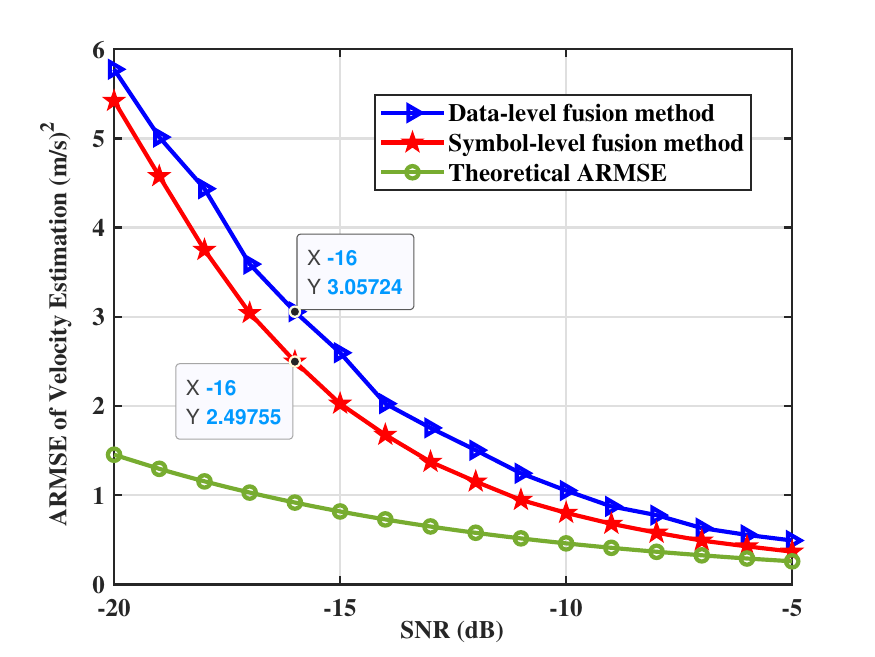}}
	\caption{Comparison of data-level fusion and symbol-level sensing data fusion methods}
	\label{fig9}
\end{figure}

\section{Conclusion} \label{se6}
In this paper, 
CA-enabled MIMO-OFDM ISAC system is investigated, 
where CA aggregates fragmented high and low-frequency bands, achieving the high-speed communication and high-accuracy sensing. 
Firstly, we introduce the signal model of the proposed CA-enabled MIMO-OFDM ISAC system
and the ISAC signal processing framework.
To address the challenges in the sensing signal processing introduced by CA, 
symbol-level fusion method of the sensing data on high and low-frequency bands is proposed. 
In the proposed symbol-level fusion method,
the phase alignment of the echo signals on high and low-frequency bands is realized by the angle compensation, spatial filtering and cyclic cross-correlation operations. Then, the fusion algorithms of feature vectors are proposed to realize high-accuracy sensing.
The simulation results reveal that the performance of symbol-level fusion method is better than that of data-level fusion method.
Furthermore, the closed-form communication MI and sensing CRLB of the proposed ISAC signal are derived.
Numerical results reveal that the proposed CA-enabled MIMO-OFDM ISAC signal exhibits superior communication rate and high-accuracy sensing performance over the conventional MIMO-OFDM ISAC signal. 
This work reveals the potential of CA-enabled MIMO-OFDM ISAC signal in enhancing communication and sensing capabilities.
However, the proposed fusion method does not reach the theoretical lower bound of the proposed CA-enabled MIMO-OFDM ISAC system. Future trends include designing fusion methods and resource allocation methods to approximate the theoretical lower bound.

\begin{appendices}
\section{proof of theorem \ref{theorem 3}} \label{apdA}
According to (\ref{eq12}), in the $b$-th CC, the received echo signal of the $i$-th target by the $p$-th receive antenna on the $n$-th subcarrier during the $m$-th OFDM symbol time is expressed as 
\begin{equation}\label{apeq1}
\begin{aligned}
     y_{p,m,n}^i= & Ae^{j2\pi mf_{\mathrm{C}}^{b}T^{b}\gamma_i}e^{-j2\pi n\Delta f^{b}\tau_{i,0}}e^{j2\pi p\left(\frac{d_{\text{r}}}{\lambda^{b}}\right){\sin\theta_{i,\text{Rx}}}} \\
     & +w_{p,m,n}^i,
\end{aligned}
\end{equation}
where $A \in \mathbb{C}$ and $w_{p,m,n}^i \sim \mathcal{CN}(0,\sigma^2)$ is the AWGN.

The likelihood function for the joint estimation of $\sin\theta_{i,\text{Rx}}$, $\tau_{i,0}$, and $\gamma_i$ is 
\begin{equation}\label{apeq2}
\begin{aligned}
&\ln f\left(y;\sin\theta_{i,\text{Rx}},\tau_{i,0},\gamma_i\right) \\&=  -\frac{M_{b}N_{b}N_{\text{R}}}{2}\ln\left(2\pi\sigma^{2}\right)  \\
&\quad -\frac1{2\sigma^2}\sum_p\sum_m\sum_n\left(y_{p,m,n}^i-s_{p,m,n}^i\right)^*\left(y_{p,m,n}^i-s_{p,m,n}^i\right),
\end{aligned}
\end{equation}
where $y$ is the received echo signal of the $i$-th target, and the PDF is 
\begin{equation}  \label{apeq3} 
\begin{aligned} &f\left(y;\sin\theta_{i,\text{Rx}},\tau_{i,0},\gamma_i\right)\\ &=\frac1{\left(2\pi\sigma^2\right)^{M_bN_bN_\text{R}/2}}e^{-\frac1{2\sigma^2}\sum\limits_p\sum\limits_m\sum\limits_n\left|y_{p,m,n}^i-s_{p,m,n}^i\right|^2},
\end{aligned}
\end{equation} \label{apeq4}
with 
\begin{equation}
    s_{p,m,n}^i=Ae^{j2\pi mf_{\mathrm{C}}^{b}T^{b}\gamma_i}e^{-j2\pi n\Delta f^{b}\tau_{i,0}}e^{j2\pi p\left(\frac{d_{\text{r}}}{\lambda^{b}}\right){\sin\theta_{i,\text{Rx}}}}.
\end{equation}
Based on the relationship between the Fisher information matrix (FIM) and CRLB, we can obtain that 
\begin{equation} \label{apeq5}
{ \fontsize{8}{8}
\begin{aligned}
   & \mathbf{F}^{-1}=\left[
    \begin{array}{ccc}   F_{\sin{\theta_{i,\text{Rx}}},\sin{\theta_{i,\text{Rx}}}}  & F_{\sin{\theta_{i,\text{Rx}}},\tau_{i,0}} &
    F_{\sin{\theta_{i,\text{Rx}}},\gamma_i}\\
    F_{\tau_{i,0},\sin{\theta_{i,\text{Rx}}}}     & F_{\tau_{i,0},\tau_{i,0}} & F_{\tau_{i,0}, \gamma_i}\\
    F_{\gamma_i,\sin{\theta_{\text{Rx}}}} &
    F_{\gamma_i,\tau_{i,0}} & F_{\gamma_i,\gamma_i}
    \end{array}\right]^{-1} \\
    &= \left[
    \begin{array}{ccc}
      CRLB(\sin{\theta_{i,\text{Rx}}}) & Cov(\sin{\theta_{i,\text{Rx}}},\tau_{i,0}) &Cov(\sin{\theta_{i,\text{Rx}}},\gamma_i) \\Cov(\tau_{i,0},\sin{\theta_{i,\text{Rx}}})
     & CRLB(\tau_{i,0}) & Cov(\tau_{i,0},\gamma_i) \\
     Cov(\gamma_i,\sin{\theta_{i,\text{Rx}}}) &
     Cov(\gamma_i,\tau_{i,0}) & CRLB(\gamma_i)
    \end{array}
    \right] ,   
\end{aligned}}
\end{equation}
where $Cov(\cdot)$ is the covariance operation and $F_{\alpha,\beta}=-E\left[\frac{\partial^2\ln f\left(x;\alpha,\beta,...\right)}{\partial\alpha\partial\beta}\right]$.

According to (\ref{apeq2}) and (\ref{apeq5}), the CRLBs of angle, range, and velocity estimations for MIMO-OFDM ISAC system are derived, as shown in \hyperref[theorem 3]{\textbf{Theorem 3}}.
\end{appendices}

\bibliographystyle{IEEEtran}
\bibliography{reference}

\end{document}